%% file: concur.tex
\documentclass[a4paper,UKenglish,cleveref, autoref, thm-restate, dvipsnames]{lipics-v2021}
\hideLIPIcs  %uncomment to remove references to LIPIcs series (logo, DOI, ...), e.g. when preparing a pre-final version to be uploaded to arXiv or another public repository

\title{Algebraic Characterization of FO-definable Languages of Higher-Dimensional Automata} %TODO Please add

\titlerunning{Algebraic Characterization of FO-definable Languages of HDA}

\author{Enzo Erlich}{Université Paris Cité, CNRS, IRIF, F-75013, Paris, France \and EPITA Research Laboratory (LRE), Paris, France}{erlich@irif.fr}{https://orcid.org/0009-0007-3574-6362}{}

\author{Jérémy Ledent}{Université Paris Cité, CNRS, IRIF, F-75013, Paris, France}{jeremy.ledent@irif.fr}{https://orcid.org/0000-0001-7375-4725}{}

\author{Krzysztof Ziemiański}{Institute of Mathematics, University of Warsaw, Warsaw, Poland}{ziemians@mimuw.edu.pl}{https://orcid.org/0000-0001-7695-4028}{}

\authorrunning{E. Erlich, J. Ledent, K. Ziemiański} %TODO mandatory. First: Use abbreviated first/middle names. Second (only in severe cases): Use first author plus 'et al.'

\Copyright{Enzo Erlich, Jérémy Ledent and Krzysztof Ziemiański} %TODO mandatory, please use full first names. LIPIcs license is "CC-BY";  http://creativecommons.org/licenses/by/3.0/

\ccsdesc[500]{Theory of computation~Concurrency}
\ccsdesc[500]{Theory of computation~Automata extensions}
\ccsdesc[300]{Theory of computation~Logic and verification}

\keywords{Higher-dimensional automata, Pomset languages, McNaughton-Papert theorem, Counter-free HDA, Aperiodic category} %TODO mandatory; please add comma-separated list of keywords

\category{} %optional, e.g. invited paper

\acknowledgements{}%optional

\nolinenumbers %uncomment to disable line numbering

\usepackage{mathpartir}
\usepackage{macros}

\EventEditors{Ana Sokolova and Patrick Totzke}
\EventNoEds{2}
\EventLongTitle{37th International Conference on Concurrency Theory (CONCUR 2026)}
\EventShortTitle{CONCUR 2026}
\EventAcronym{CONCUR}
\EventYear{2026}
\EventDate{September 1--4, 2026}
\EventLocation{Liverpool, UK}
\EventLogo{}
\SeriesVolume{391}
\ArticleNo{1}
\begin{document}

\maketitle

%TODO mandatory: add short abstract of the document
\begin{abstract}
    Higher-dimensional automata (HDA) are a model of concurrency that models simultaneous execution of events using higher dimensional cells.
    HDA recognize languages of pomsets, a generalization of finite words whose letters are partially ordered.
    We prove a new algebraic characterization of HDA languages: a language of pomsets is regular if and only if it is the inverse image of a functor from the category of pomsets into a finite category.
    Furthermore, the language is definable in first-order logic exactly when it is recognized by an aperiodic category, generalizing the McNaughton-Papert theorem to HDA languages.
    We also investigate a notion of counter-free HDA, and show that if a language is accepted by a counter-free HDA, it must be definable in first-order logic. The converse, however, is still open.  
\end{abstract}

\input{introduction}
\input{prelim}

\input{recognizability}
\input{firstorder}
\input{aperiodic}

\input{conclusion}

%%
%% Bibliography
%%

%% Please use bibtex, 

\bibliography{bibliography}

%\appendix
%
%\input{appendix}

\end{document}

%% file: introduction.tex
\section{Introduction}

Over finite words, the class of languages definable in First-Order~(FO) logic is a strict subclass of regular languages, which admits several characterizations.
For a language $L \subseteq \Sigma^*$, the following are equivalent:
\begin{enumerate}
\item $L$ is recognized by a star-free regular expression,
\item $L$ is recognized by an aperiodic monoid,
\item $L$ is recognized by a counter-free automaton,
\item $L$ is definable in first-order logic (FO),
\item $L$ is definable in linear temporal logic (LTL).
\end{enumerate}
The proof of $(1) \Leftrightarrow (2)$ is due to Sch\"utzenberger~\cite{Schutzenberger65}; the equivalence between $(2), (3)$ and $(4)$ was proved by McNaughton and Papert~\cite{McNaughton71}; and Kamp~\cite{Kamp68} showed the last equivalence $(4) \Leftrightarrow (5)$.
%This is due to the seminal work of Sch\"utzenberger~\cite{Schutzenberger65}, Kamp~\cite{Kamp68}, and McNaughton and Papert~\cite{McNaughton71}.

In this paper, we are interested in extending these results to \emph{Higher-Dimensional Automata} (HDA).
HDA are a model of concurrency that enriches standard finite-state automata with higher-dimensional cells, allowing to specify that some events may occur simultaneously.
Although they were originally introduced by Pratt~\cite{Pratt91} in 1991, a proper language theory for HDA was only developed recently~\cite{FahrenbergJSZ21}.
Many classic results of automata theory carry over to HDA languages~\cite{FahrenbergJSZ24, FahrenbergZ24, AmraneBFZ25, AmraneBFF24, ClementEL26}.
%there is a version of the Kleene theorem~\cite{FahrenbergJSZ22}, a Myhill-Nerode theorem~\cite{FahrenbergZ23}, ....
Most relevant for our purpose is the work developed in~\cite{ClementEL26}, which proves a variant of Kamp's theorem for HDA.
In that paper, the authors define a temporal logic on pomsets that has the same expressive power as first order logic.
This is interesting for automatic verification of concurrent programs: model a program using an HDA; specify its desired behavior using LTL on pomsets; and algorithmically check that the program satisfies its intended specification.

Following McNaughton and Papert, our goal is to find two additional characterizations of the class of FO-definable HDA languages: (i) via an algebraic characterization akin to aperiodic monoids, and (ii) using a notion of \emph{counter-free HDA}.
Extending these classic results of automata theory to the higher-dimensional setting shows that FO and LTL are robust specification languages for model-checking HDA properties.
We leave open the question of finding a notion of star-free regular expression for HDA that captures the same class of languages.

\subparagraph{Pomset languages of HDA.}

While standard finite state automata recognize languages of words, HDA recognize languages of \emph{partially ordered multi-sets}, or \emph{pomsets}.
A word can be viewed as a special case of pomset, where the letters are totally ordered; such pomsets correspond to sequential executions.
But in general, two letters in a pomset need not be ordered: when two letters are incomparable, we think of them as two events occurring concurrently.
More precisely, events are ordered according to Lamport's happens-before partial order~\cite{Lamport86}. We say that event~$a$ \emph{precedes} event~$b$, written ${a \to b}$, if event~$a$ is terminated before $b$ starts.
Thus, two events are incomparable (a.k.a.\ \emph{concurrent}) when the two intervals during which they are executed overlap.

\begin{figure}[h]
\begin{tabular}{ccccc}
$
\left(
\begin{tikzpicture}[scale=1, baseline=2.5ex, every node/.style={transform shape}]
    \node (b) at (0,0) {$\ibullet b$};
    \node (d) at (1,0) {$\vphantom{\ibullet} d \ibullet$};
    \node (a) at (0.5,0.8) {$\vphantom{\ibullet} a \ibullet$};
    \path[precedence] (b) edge (d);
%    \path[event-order] (a) edge (b) (a) edge (d);
\end{tikzpicture}
\right)
$
& {\Large $\glue$} &
$
\left(
\begin{tikzpicture}[scale=1, baseline=2.5ex, every node/.style={transform shape}]
    \node (a) at (0,0.8) {$\ibullet a \vphantom{\ibullet}$};
    \node (c) at (1,0.8) {$ \vphantom{\ibullet}c\vphantom{\ibullet}$};
    \node (d) at (0.5,0) {$ \ibullet d\ibullet$};
    \path[precedence] (a) edge (c);
%    \path[event-order] (a) edge (d) (d) edge (c);
\end{tikzpicture}
\right)
$
& {\Large $=$} &
$
\begin{tikzpicture}[scale=1, baseline=2.5ex, every node/.style={transform shape}]
    \node (a) at (0,0.8) {$\vphantom{\ibullet}a$};
    \node (c) at (1,0.8) {$c\vphantom{\ibullet}$};
    \node (b) at (0,0) {$\ibullet b$};
    \node (d) at (1,0) {$d\ibullet$};
    \path[precedence] (b) edge (d) (a) edge (c) (b) edge (c);
%    \path[event-order] (a) edge (b) (a) edge (d) (c) edge (d);
\end{tikzpicture}
$
\\[0.8cm]
\scalebox{0.9}{
\begin{tikzpicture}[baseline=2.5ex]
    \def\hw{0.3}
    \def\fh{0.6}
    \def\sh{0.1}

    \coordinate (O) at (0,0);
    \coordinate[above = 1cm of O] (Up);
    \coordinate[right = 1.5cm of O] (Right);
    \coordinate[above = 1cm of Right] (Up-right);
    \draw[-] (O) -- (0,\sh);
    \draw[-] (0, \sh+\hw) -- (Up);
    \draw[-] (Right) -- (Up-right);

    \fill[pomset-1] (0.2,\fh) -- (1.51,\fh) -- (1.51,\fh+\hw) -- (0.2, \fh+\hw) --  cycle; 
    \fill[pomset-1] (1.53,\fh) -- (1.55,\fh) -- (1.55,\fh+\hw) -- (1.53, \fh+\hw) --  cycle; 
    \fill[pomset-1] (1.57,\fh) -- (1.59,\fh) -- (1.59,\fh+\hw) -- (1.57, \fh+\hw) --  cycle; 
    \fill[pomset-1] (1.61,\fh) -- (1.63,\fh) -- (1.63,\fh+\hw) -- (1.61, \fh+\hw) --  cycle;
    \draw[-]  (1.51,\fh+\hw) -- (0.2, \fh+\hw) -- (0.2,\fh) -- (1.51,\fh)  ;
    \node at (0.85,\fh+\hw*0.5) {$a$};

    \fill[pomset-2-light] (-0.01,\sh) -- (0.8,\sh) -- (0.8,\sh+\hw) -- (-0.01, \sh+\hw) --  cycle; 
    \fill[pomset-2-light] (-0.13,\sh) -- (-0.11,\sh) -- (-0.11,\sh+\hw) -- (-0.13, \sh+\hw) --  cycle; 
    \fill[pomset-2-light] (-0.09,\sh) -- (-0.07,\sh) -- (-0.07,\sh+\hw) -- (-0.09, \sh+\hw) --  cycle; 
    \fill[pomset-2-light] (-0.05,\sh) -- (-0.03,\sh) -- (-0.03,\sh+\hw) -- (-0.05, \sh+\hw) --  cycle; 
    \draw[-] (0,\sh) --(0.8, \sh) --  (0.8,\sh+\hw) -- (0,\sh+\hw);
    \node at (0.4,\sh+\hw*0.5) {$b$};

    \fill[pomset-3-light] (1,\sh) -- (1.51,\sh) -- (1.51,\sh+\hw) -- (1, \sh+\hw) --  cycle; 
    \fill[pomset-3-light] (1.53,\sh) -- (1.55,\sh) -- (1.55,\sh+\hw) -- (1.53, \sh+\hw) --  cycle; 
    \fill[pomset-3-light] (1.57,\sh) -- (1.59,\sh) -- (1.59,\sh+\hw) -- (1.57, \sh+\hw) --  cycle; 
    \fill[pomset-3-light] (1.61,\sh) -- (1.63,\sh) -- (1.63,\sh+\hw) -- (1.61, \sh+\hw) --  cycle; 
    \draw[-]  (1.51,\sh+\hw) -- (1, \sh+\hw) -- (1,\sh) -- (1.51,\sh)  ;
    \node at (1.25,\sh+\hw*0.5) {$d$};
\end{tikzpicture}
}
& {\Large $\glue$} &
\scalebox{0.9}{
\begin{tikzpicture}[baseline=2.5ex]
    \def\hw{0.3}
    \def\fh{0.6}
    \def\sh{0.1}

    \coordinate (O) at (0,0);
    \coordinate[above = 1cm of O] (Up);
    \coordinate[right = 1.5cm of O] (Right);
    \coordinate[above = 1cm of Right] (Up-right);
    \draw[-] (O) -- (Up);
    \draw[-] (Right) -- (Up-right);

    \fill[pomset-1] (-0.01,\fh) -- (0.5,\fh) -- (0.5,\fh+\hw) -- (-0.01, \fh+\hw) --  cycle; 
    \fill[pomset-1] (-0.13,\fh) -- (-0.11,\fh) -- (-0.11,\fh+\hw) -- (-0.13, \fh+\hw) --  cycle; 
    \fill[pomset-1] (-0.09,\fh) -- (-0.07,\fh) -- (-0.07,\fh+\hw) -- (-0.09, \fh+\hw) --  cycle; 
    \fill[pomset-1] (-0.05,\fh) -- (-0.03,\fh) -- (-0.03,\fh+\hw) -- (-0.05, \fh+\hw) --  cycle; 
    \draw[-] (-0.01,\fh+\hw) -- (0.5, \fh+\hw) -- (0.5,\fh) -- (-0.01,\fh);
    \node at (0.25,\fh+\hw*0.5) {$a$};

    \filldraw[pomset-4] (0.7,\fh) -- (1.3,\fh) -- (1.3,\fh+\hw) -- (0.7, \fh+\hw) --  cycle; 
    \node at (1,\fh+\hw*0.5) {$c$};

    \fill[pomset-3-light] (-0.01,\sh) -- (1.51,\sh) -- (1.51,\sh+\hw) -- (-0.01, \sh+\hw) --  cycle; 
    \fill[pomset-3-light] (1.53,\sh) -- (1.55,\sh) -- (1.55,\sh+\hw) -- (1.53, \sh+\hw) --  cycle; 
    \fill[pomset-3-light] (1.57,\sh) -- (1.59,\sh) -- (1.59,\sh+\hw) -- (1.57, \sh+\hw) --  cycle; 
    \fill[pomset-3-light] (1.61,\sh) -- (1.63,\sh) -- (1.63,\sh+\hw) -- (1.61, \sh+\hw) --  cycle; 
    \fill[pomset-3-light] (-0.13,\sh) -- (-0.11,\sh) -- (-0.11,\sh+\hw) -- (-0.13, \sh+\hw) --  cycle; 
    \fill[pomset-3-light] (-0.09,\sh) -- (-0.07,\sh) -- (-0.07,\sh+\hw) -- (-0.09, \sh+\hw) --  cycle; 
    \fill[pomset-3-light] (-0.05,\sh) -- (-0.03,\sh) -- (-0.03,\sh+\hw) -- (-0.05, \sh+\hw) --  cycle; 
    \draw[-] (1.51,\sh+\hw) -- (-0.01, \sh+\hw);
    \draw[-] (-0.01,\sh) -- (1.51,\sh);
    \node at (0.75,\sh+\hw*0.5) {$d$};
\end{tikzpicture}
}
& {\Large $=$} &
\scalebox{0.9}{
\begin{tikzpicture}[baseline=2.5ex]
    \def\hw{0.3}
    \def\fh{0.6}
    \def\sh{0.1}

    \coordinate (O) at (0,0);
    \coordinate[above = 1cm of O] (Up);
    \coordinate[right = 3cm of O] (Right);
    \coordinate[above = 1cm of Right] (Up-right);
    \draw[-] (O) -- (0,\sh);
    \draw[-] (0, \sh+\hw) -- (Up);
    \draw[-] (Right) -- (Up-right);

    \filldraw[pomset-1] (0.2,\fh) -- (2,\fh) -- (2,\fh+\hw) -- (0.2, \fh+\hw) --  cycle; 
    \node at (1.1,\fh+\hw*0.5) {$a$};

    \filldraw[pomset-4] (2.2,\fh) -- (2.8,\fh) -- (2.8,\fh+\hw) -- (2.2, \fh+\hw) --  cycle; 
    \node at (2.5,\fh+\hw*0.5) {$c$};

    \fill[pomset-2-light] (-0.01,\sh) -- (0.8,\sh) -- (0.8,\sh+\hw) -- (-0.01, \sh+\hw) --  cycle; 
    \fill[pomset-2-light] (-0.13,\sh) -- (-0.11,\sh) -- (-0.11,\sh+\hw) -- (-0.13, \sh+\hw) --  cycle; 
    \fill[pomset-2-light] (-0.09,\sh) -- (-0.07,\sh) -- (-0.07,\sh+\hw) -- (-0.09, \sh+\hw) --  cycle; 
    \fill[pomset-2-light] (-0.05,\sh) -- (-0.03,\sh) -- (-0.03,\sh+\hw) -- (-0.05, \sh+\hw) --  cycle; 
    \draw[-] (0,\sh) --(0.8, \sh) --  (0.8,\sh+\hw) -- (0,\sh+\hw);
    \node at (0.4,\sh+\hw*0.5) {$b$};

    \fill[pomset-3-light] (1,\sh) -- (3.01,\sh) -- (3.01,\sh+\hw) -- (1, \sh+\hw) --  cycle; 
    \fill[pomset-3-light] (3.03,\sh) -- (3.05,\sh) -- (3.05,\sh+\hw) -- (3.03, \sh+\hw) --  cycle; 
    \fill[pomset-3-light] (3.07,\sh) -- (3.09,\sh) -- (3.09,\sh+\hw) -- (3.07, \sh+\hw) --  cycle; 
    \fill[pomset-3-light] (3.11,\sh) -- (3.13,\sh) -- (3.13,\sh+\hw) -- (3.11, \sh+\hw) --  cycle; 
    \draw[-]  (3.01,\sh+\hw) -- (1, \sh+\hw) -- (1,\sh) -- (3.01,\sh)  ;
    \node at (2,\sh+\hw*0.5) {$d$};
\end{tikzpicture}
}
\end{tabular}
\caption{Gluing of pomsets and their interval representation}
\label{fig:gluing}
\end{figure}

Additionally, pomsets are equipped with \emph{interfaces}.
The \emph{starting interface} is a set of events that are already started at the beginning of the computation. They are required to be minimal w.r.t.\ the precedence partial order.
Symmetrically, the \emph{terminating interface} is a set of events, maximal w.r.t.\ precedence, that are still ongoing at the end of the computation.
When two pomsets $P$ and $Q$ have matching interfaces (that is, the terminating interface of $P$ is the same as the starting interface of $Q$), we can \emph{glue} them together to obtain a new pomset $P \glue Q$.

Figure~\ref{fig:gluing} depicts the gluing of two pomsets, viewed as partial orders (top), and as intervals (bottom).
Events that belong to the starting (resp., terminating) interface are marked with a dot to the left (resp., right) of the letter.
Gluing plays the role of word concatenation.
However, since it is defined only for pomsets with matching interfaces, the underlying algebraic structure is that of a category rather than a monoid.
We write $\iIPoms$ for the category whose objects are interfaces, and morphisms are pomsets.

\subparagraph{Contributions.}

We first prove in \cref{sec:recognizability} the following algebraic characterization of HDA languages: a language of pomsets is regular if and only if it is the inverse image of a functor from $\iIPoms$ into a finite category (\Cref{thm:recognizable}).
The left-to-right implication follows the standard construction of the syntactic monoid associated with a language; in our setting, this is now a \emph{syntactic category}.
The converse direction requires a bit more work however.
While, for finite words, a morphism into a finite monoid can immediately be viewed as an automaton using the Cayley graph of the monoid; this is not the case for HDA.
What we get instead is what we call an $\iIPoms$-module. To turn it into an HDA, we rely on the canonical suffix presentation of a language of pomsets~\cite{FahrenbergZ24}.

Then, in \cref{sec:fo} we investigate the class of languages that are recognized by a functor into an aperiodic category.
A category is aperiodic when every endomorphism $x$ is such that $x^n = x^{n+1}$ for some~$n$.
We show in \cref{thm:aper-cat-fo} that the class of languages recognized aperiodic categories is also the class of FO-definable languages.
To prove this, we use the so-called ST-decomposition of a pomset.
This allows us to view a language of pomsets as a language of words, and relying on results from~\cite{ClementEL26}, we can transport the McNaughton-Papert theorem from words to pomsets.

Lastly in \cref{sec:aperiodicity}, we define a notion of counter-free HDA, and show that every language accepted by a counter-free HDA is aperiodic.
However, the converse direction remains an open question.
This is because the geometric structure of HDAs interacts with counter-freeness in a non-trivial way.
We explain why a naive attempt at solving the open question fails, and conclude with an example of a pomset language that exhibits some form of counting, despite being aperiodic.

%\subparagraph{Related work.}
%
%TODO

%\subparagraph{Plan of the paper.}
%
%TODO

%% file: prelim.tex
\section{Preliminaries}

In this section, we briefly recall the definitions about pomsets and HDA that we will need in the rest of the paper.
We encourage readers unfamiliar with pomset languages of HDA to read a more gentle introduction to those notions, e.g.~\cite{FahrenbergJSZ24}.
We also rely on a few notions from category theory, namely \emph{categories} and \emph{functors}. The necessary definitions can be found in the first chapter of any category theory textbook, e.g.~\cite{Awodey2010}.

\subparagraph*{Notations.}

For a category $\cC$, we write $\Ob{\cC}$ for the objects of $\cC$, and $\Mor{\cC}$ for the morphisms.
Composition is written in diagrammatic order, i.e., given morphisms $f : A \to B$ and $g : B \to C$, their composition is denoted $fg$ or $f \comp g$ rather than $g \circ f$.

\subsection{Higher Dimensional Automata (HDA)}

Fix a finite alphabet $\Sigma$.
An HDA is built from higher-dimensional cells.
Each cell is labeled with a finite list of elements of $\Sigma$, that we think of as the currently running events.
This totally ordered list of events is called a concurrency list, or \emph{conclist}.
\begin{definition}[Conclist]
A conclist $(U, \eventorder, \lambda)$ consists of a finite set $U$ equipped with a strict total order $\eventorder$, and a labeling function $\lambda : U \to \Sigma$.
\end{definition}
A conclist may have several occurrences of the same letter $a \in \Sigma$ running in parallel, in which case the total order $\eventorder$ is crucial to differentiate the corresponding events.
We write $\CList$ for the set of conclists.

\begin{definition}[HDA]
An HDA is a tuple $(X, \ev, \{\delta^0_{A,U}, \delta^1_{A,U} \mid A \subseteq U \in \CList \}, \bot, \top)$, where:
\begin{itemize}
\item $X$ is a finite set of cells, and $\ev : X \to \CList$ assigns a conclist to each cell. For a conclist~$U$, we write $X[U] = \{x \in X \mid \ev(x) = U \}$ for the set of cells of type~$U$.
\item For each $U \in \CList$ and $A \subseteq U$,  $\delta^0_{A,U} : X[U] \to X[U-A]$ is called the lower face map, and $\delta^1_{A,U} : X[U] \to X[U-A]$ the upper face map.
We often omit the subscript $U$ when it is clear from context.
The face maps must satisfy the precubical identities: for all $\mu, \nu \in \{0,1\}$ and $A \cap B = \emptyset$, we require $\delta^\mu_A \delta^\nu_B = \delta^\nu_B \delta^\mu_A$.
\item $\bot, \top \subseteq X$ are sets of initial and accepting cells, respectively. Often, $\bot \subseteq X[\emptyset]$.
\end{itemize}
\end{definition}

We often denote an HDA by its underlying set $X$.
The \emph{dimension} of an HDA is the maximal number of events active in a cell, $\dim(X) = \max\{\card{\ev(x)} \mid x \in X\}$.
A path in an HDA is a sequence of cells, where consecutive cells can be connected in two ways: either by starting some events, that is, moving from a lower face of~$x$ to the cell~$x$; or by terminating some events, that is, moving from a cell~$x$ to an upper face of~$x$.

\begin{definition}[Path]
A path in an HDA~$X$ is a sequence $\alpha = (x_0, \phi_1, x_1, \ldots, x_{n-1}, \phi_n, x_n)$,
where each $x_i \in X$, each $\phi_i \in \{ \upstep{A}, \downstep{A} \mid A \in \CList \}$, such that for all $1 \leq i \leq n$, either
\begin{itemize}
\item $\phi_i = \upstep{A}$ for some $A \subseteq \ev(x_i)$, and $x_{i-1} = \delta^0_A(x_i)$, or
\item $\phi_i = \downstep{A}$ for some $A \subseteq \ev(x_{i-1})$, and $\delta^1_A(x_{i-1}) = x_i$.
\end{itemize}
\end{definition}

The \emph{source} and \emph{target} of a path are $\src(\alpha) = x_0$ and $\tgt(\alpha) = x_n$, respectively.
A path is \emph{accepting} when $\src(\alpha) \in \bot$ and $\tgt(\alpha) \in \top$.
To define the language accepted by an HDA, we first need to introduce pomsets.

\begin{example}
A $2$-dimensional HDA is depicted in \cref{fig:hda-example}.
It consists of four $2$-dimensional cells, depicted as gray squares, twelve $1$-dimensional cells, depicted as edges, and nine $0$-dimensional cells, depicted as vertices.
Cells are labeled with the associated conclist. For example, the bottom-left square has two events $a$ and $b$ running in parallel, so its associated conclist is $U = \{a \eventorder b\}$, which we sometimes depict vertically as $\conclist{a\\b}$.
From this square, the lower face map $\delta^0_{\{a\},U}$ assigns the leftmost $b$-labeled edge, where event $a$ has not started yet; and the upper face map $\delta^1_{\{a\},U}$ assigns the middle $b$-labeled edge, where event $a$ is terminated. Similarly, the two other face maps $\delta^0_{\{b\},U}$ and $\delta^1_{\{b\},U}$ assign the two horizontal $a$-labeled boundaries of the $\conclist{a\\b}$-labeled square.
\end{example}

\begin{figure}[h]
\centering
\scalebox{0.8}{
\begin{tikzpicture}[x=1cm, y=1cm]
    % \path[use as bounding box] (0,-.7) -- (2,-.7) -- (2,2.6) -- (0,2.6) -- (0,-.7);
    \path[fill=black!10!white] (0,0) -- (4,0) -- (4,4) -- (0,4)-- (0,0);
    \node (x) at (1,1) {\Large$\conclist{a\\b}$};
    \node (y) at (3,1) {\Large$\conclist{c\\b}$};
    \node (w) at (1,3) {\Large$\conclist{a\\a}$};
    \node (z) at (3,3) {\Large$\conclist{c\\a}$};
    \node[state, minimum size=5pt, fill=white] (00) at (0,0) {};
    \node[] (phantom1) at (1.7,2) {};
    % \node[] (phantom1) at (1.8,2) {//};
    % \node[] (phantom2) at (0,1) {};
    \node[state, minimum size=5pt, fill=white] (01) at (0,2) {};
    \node[state, minimum size=5pt, fill=white] (02) at (0,4) {};
    \node[state, minimum size=5pt, fill=white] (10) at (2,0) {};
    \node[state, minimum size=5pt, fill=white] (11) at (2,2) {};
    \node[state, minimum size=5pt, fill=white] (12) at (2,4) {};
    \node[state, minimum size=5pt, fill=white] (20) at (4,0) {};
    \node[state, minimum size=5pt, fill=white] (21) at (4,2) {};
    \node[state, minimum size=5pt, fill=white] (22) at (4,4) {};
    \path[draw,->] (00) edge node[below]  {$a$} (10);
    \path[draw,->] (01) edge node[above] {$a$} (11);
    \path[draw,->] (10) edge node[above right] {$b$} (11);
    \path[draw,->] (00) edge node[left]  {$b$} (01);
    \path[draw,->] (10) edge node[below]  {$c$} (20);
    \path[draw,->] (11) edge node[above right] {$c$} (21);
    \path[draw,->] (12) edge node[above]  {$c$} (22);
    \path[draw,->] (01) edge node[left]  {$a$} (02);
    \path[draw,->] (02) edge node[above]  {$a$} (12);
    \path[draw,->] (11) edge node[right] {$a$} (12);
    \path[draw,->] (20) edge node[right]  {$b$} (21);
    \path[draw,->] (21) edge node[right]  {$a$} (22);
    \path[draw, dashed,->] (00) -- (x);
    \path[draw, dashed,->] (x) -- (1.9,1);
    \path[draw, dashed,->] (2.1,1) -- (y);
    \path[draw, dashed,->] (y) -- (3,1.9);
    \path[draw, dashed,->] (3,2.1) -- (z);
    \path[draw, dashed,->] (z) -- (3.9,3);
    \begin{scope}[xshift=6cm, yshift=2.8cm]
    	 \draw[pomset-1] (0.2,0.8) rectangle (1.8,1.2);
    	 \node at (1, 1) {$a$};
    	 \draw[fill=pomset-2-light] (0.2,0) rectangle (3.8,0.4);
    	 \node at (2, 0.2) {$b$};
    	 \draw[pomset-4] (2.2,0.8) rectangle (5.8,1.2);
    	 \node at (4, 1) {$c$};
    	 \fill[pomset-3-light] (6,0) rectangle (4.2,0.4);
    	 \fill[pomset-3-light] (6.04,0) rectangle (6.08,0.4);
    	 \fill[pomset-3-light] (6.12,0) rectangle (6.16,0.4);
    	 \fill[pomset-3-light] (6.20,0) rectangle (6.24,0.4);
    	 \draw[-] (6,0) -- (4.2,0) -- (4.2,0.4) -- (6,0.4);
    	 \node at (5, 0.2) {$a$};
    \end{scope}
    \begin{scope}[xshift=6cm, yshift=0cm]
    	 \node (a1) at (1, 1.4) {$a$};
    	 \node (b) at (2, 0) {$b$};
    	 \node (c) at (4, 1.4) {$c$};
    	 \node (a2) at (5, 0) {$a\ibullet$};
    	 \draw[->] (a1)--(c);
    	 \draw[->] (a1)--(a2);
    	 \draw[->] (b)--(a2);
    	 \draw[dotted,->] (a1) -- (b);
    	 \draw[dotted,->] (c) -- (b);
    	 \draw[dotted,->] (c) -- (a2);
    \end{scope}
  \end{tikzpicture}
  }
  \caption{An example of an HDA (left). Cells are labeled with associated conclists. The dashed arrows form a path, the pomset recognized by this path is illustrated at the right-hand side. The bullet at $a$ reflects the fact that the second occurrence of $a$ has not yet been terminated.}
  \label{fig:hda-example}
\end{figure}

\subsection{Pomset languages}
\label{sec:pomsets}

HDA recognize languages of partially ordered multi-sets, or \emph{pomsets}, a.k.a.\ labeled partial orders.
In fact, we require additional structure: an \emph{event order} $\eventorder$ to distinguish autoconcurrent events; source and target interfaces to allow for concatenation (or \emph{gluing}) of pomsets; and the partial order is always assumed to be an interval order.
For conciseness, we overload the word ``pomset'', always assuming the whole structure.

\begin{definition}[iiPomset]\label{def-pomset}
A pomset over $\Sigma$ 
is a tuple $(P, \precedence_{P}, \eventorder_{P}, \labelling_{P}, S_P, T_P)$ where:
\begin{itemize}
\item $(P, \precedence_{P})$ is a finite set equipped with a strict partial interval order called precedence.
\item $\eventorder_{P}$ is an acyclic relation on $P$ called the event order, such that for all $x \neq y \in P$, if $x \not\precedence_{P} y$ and $y \not\precedence_{P} x$, then either $x \eventorder_{P} y$ or $y \eventorder_{P} x$.
\item $\labelling_{P} \colon P \to \Sigma$ is a labeling function.
\item $S_P, T_P \subseteq P$ are the source (resp.\  target) interfaces, such that elements of $S_{P}$ (resp.\ $T_{P}$) are minimal (resp.\ maximal) elements w.r.t.~$\precedence_{P}$.
\end{itemize} 
%We write $x \parallel_{P} y$ when $x \not<_{P} y$ and $y \not<_P x$.
When there is no ambiguity, we denote $ \precedence_{P}, \eventorder_{P}$ and $\labelling_{P}$ as 
 $\precedence$, $\eventorder$ and $\labelling$.
\end{definition}

An example pomset is depicted in \cref{fig:hda-example}. In pictures, precedence $\precedence_{P}$ is represented using full arrows, and event order $\eventorder_{P}$ as dotted arrows, with transitive arrows omitted.
The source (resp.\ target) interfaces are represented as bullets before (resp.\ after) the label of the event.

A pomset $P$ is called \emph{discrete} when it has an empty precedence order.
In that case, the event order $\eventorder$ is a strict total order.
%Thus, we will write discrete pomsets as lists of events, between square brackets, where the event order is omitted and goes implicitly from top to bottom.
%For instance, $P = \pomset{\pinterface a \ibullet}{\ibullet b \pinterface}$ is a discrete pomset with two concurrent events $a \eventorder b$, where $S_{P} = \{b\}$ and $T_{P} = \{a\}$. 
Given a conclist $(U, \eventorder, \lambda) \in \CList$ and $A \subseteq U$, we define the following discrete pomsets:
\begin{itemize}
%\item a \emph{conclist} if $S_{P} = T_{P} = \emptyset$, 
\item $\id_U = (U, \emptyset, \eventorder, \lambda, U, U)$ is an \emph{identity},
\item $\starter{U}{A} = (U, \emptyset, \eventorder, \lambda, U-A, U)$ is a \emph{starter},
\item $\terminator{U}{A}\; = (U, \emptyset, \eventorder, \lambda, U, U-A)$ is a \emph{terminator}.
\end{itemize}
Intuitively, $\id_U$ is an elementary pomset where the events in $U$ are currently active in parallel. In $\starter{U}{A}$, we just started the events~$A$, so that the events~$U$ are now active.
In $\terminator{U}{A}$, the events~$U$ were active, and we terminate the events~$A$.
In small dimensions, we write discrete pomsets using a bracket notation, where the event order implicitly goes from top to bottom. For instance, $\conclist{\interface a \interface \\ \pinterface b \interface}$ denotes the starter $\starter{U}{\{b\}}$, where $U$ is the two element conclist $a \eventorder b$.

The source and target interfaces $S_P, T_P$, together with the event order $\eventorder$ and labeling map $\lambda$, can be viewed as conclists. % $(S_P, \eventorder, \lambda)$ and $(T_P, \eventorder, \lambda)$.
For $U, V \in \CList$, we write $\iIPoms(U, V)$ for the set of pomsets with source interface~$U$ and target interface~$V$ (up to isomorphism).
We define the \emph{gluing} operation on pomsets, $\glue : \iIPoms(U, V) \times \iIPoms(V, W) \to \iIPoms(U, W)$.
\begin{definition}[Gluing]
Let $P \in \iIPoms(U, V)$ and $Q \in \iIPoms(V, W)$. Let $f : T_P \to S_Q$ be the unique conclist isomorphism between the interfaces of~$P$ and~$Q$.
Then, we define the pomset $P \glue Q = (R, \precedence_R, \eventorder_R, \lambda_R, S_R, T_R)$, where
    \begin{align*}
    R &= (P \sqcup Q) /_{x \equiv f(x)} &
    \lambda_R &= \lambda_{P} \cup \lambda_{Q} \\
    \precedence_R &=\; \precedence_P \cup \precedence_Q \cup \; (P-T_P) \times (Q - S_Q) &
    S_R &= S_P \\
    \eventorder_R &=\; \eventorder_P \cup \eventorder_Q &
    T_R &= T_Q
    \end{align*}
\end{definition}
See \Cref{fig:gluing} for an example of gluing (event order implicitly goes from top to bottom). Gluing is associative and $\id_U$ is a neutral element, this data assembles into a category $\iIPoms$, whose objects are conclists, morphisms are pomsets, and composition is gluing.

The \emph{dimension} of a pomset~$P$ is the size of a maximal $<_P$-antichain.
We write $\iIPoms_{\leq k}$ for the subcategory of $\iIPoms$ consisting of pomsets of dimension at most~$k$.
Note that $\iIPoms_{\leq k}$ has a finite number of objects: they are conclists of length at most~$k$.

\begin{definition}[Language, subsumption, downward-closure]
A \emph{language} of pomsets is a set of morphisms $L \subseteq \iIPoms$.
For $P, Q \in \iIPoms(U, V)$, we say that $P$ \emph{is subsumed by} $Q$, written $P \subsumes Q$, when there is a bijection $f : P \to Q$ preserving the interfaces, labeling and event order, such that $f(x) \precedence_Q f(y)$ implies $x \precedence_P y$.
Intuitively, the pomset~$Q$ is ``more concurrent'' than~$P$.
A language $L \subseteq \iIPoms$ is \emph{downward-closed} when $Q \in L$ and $P \subsumes Q$ implies $P \in L$.
For a language $L$, we define $\downclosure{L} \;= \{ P \in \iIPoms \mid {P \subsumes Q} \text{ for some } Q \in L \}$, called the downward-closure of~$L$.
\end{definition}

Let~$X$ be an HDA, and~$\alpha$ a path in~$X$.
We define the pomset associated with $\alpha$, $\ev(\alpha) \in \iIPoms(\ev(\src(\alpha)), \ev(\tgt(\alpha)))$, by induction on the length of~$\alpha$:
\begin{itemize}
\item If $\alpha = (x_0)$, then $\ev(\alpha) = \id_{\ev(x_0)}$.
%\item If $\alpha = (\beta, \upstep{A}, x_n)$, then $\ev(\alpha) = \ev(\beta) \glue \starter{\ev(x_n)}{A}$.
%\item If $\alpha = (\beta, \downstep{A}, x_n)$, then $\ev(\alpha) = \ev(\beta) \glue \terminator{\ev(\tgt(\beta))}{A}$.
\item If $\alpha = (\delta^0_A(x), \upstep{A}, x)$, then $\ev(\alpha) = \starter{\ev(x)}{A}$.
\item If $\alpha = (x, \downstep{A}, \delta^1_A(x))$, then $\ev(\alpha) = \terminator{\ev(x)}{A}$.
\item If $\alpha = \alpha_1 \ast \ldots \ast \alpha_n$ is a concatenation, then $\ev(\alpha) = \ev(\alpha_1) \glue \ldots \glue \ev(\alpha_n)$.
\end{itemize}

\begin{definition}[HDA language] 
The language accepted by an HDA $X$ is
\[
	\Lang(X) = \{ \ev(\alpha) \mid \text{$\alpha$ is an accepting path in $X$}\}
\]
\end{definition}
A language of pomsets is \emph{regular} if it is accepted by a finite HDA.

\begin{remark}
An important property of HDA languages is that regular languages are always downward-closed~\cite{FahrenbergJSZ21}.
This is due to the geometric structure of HDA, where whenever a higher-dimensional cell exists, all of its faces must also exist.
Thus, they model asynchronous computation, where one can never force events to occur simultaneously.
Other formalisms, such as first-order logic or recognizability by finite categories, have no such restriction.
Therefore, when comparing expressivity of HDA with other formalisms, we will explicitly restrict to the class of downward-closed languages.
\end{remark}

\subparagraph*{Global assumption.}

Throughout the paper, we only consider languages of pomsets of bounded dimension, $\iIPoms_{\leq k}$. This is a crucial assumption, as it is not yet known how to properly model concurrent programs with an unbounded numbers of processes using HDA~\cite{BaronnerBH24}.
%
%We also restrict to languages of pomsets whose source interface is $\emptyset$, meaning that no processes are active at the start of computation.
%Unlike the previous assumption, this restriction is made solely for ease of presentation, as it prevents some technical details.
%We write $\OiIPoms$ for the set of pomsets whose source interface is $\emptyset$.
%%For HDA, this amounts to requiring that $\bot \subseteq X[\emptyset]$.
%
%Also, some of our results require the language to be downward-closed, while others do not. We explicitly indicate whenever this assumption is needed.

\subsection{First-Order logic on pomsets}

Another way to describe pomset languages is via logical characterizations.
MSO logic over pomsets was introduced in~\cite{AmraneBFF24} and shown to be as expressive as HDA, for downward-closed languages.
The fragment of First Order (FO) logic has been studied in \cite{ClementEL26}, together with an equivalent LTL-like logic.
Here, we focus on the class of FO-definable languages of pomsets.
FO formulas are generated by the following grammar, where $a\in\Sigma$ and $x,y$ are variables:
\[\varphi, \psi ::= a(x) \mid \fostarter(x) \mid \foterminator(x) \mid \neg\varphi \mid \varphi\wedge\psi \mid \exists x.\varphi \mid x < y \mid x \dasharrow y\]

The semantics is straightforward: variables range over events of the pomset; $<$ and $\dasharrow$ stand for precedence and event-order, respectively; and $a$, $\fostarter$, $\foterminator$ stand for the labeling, starting interface and terminating interface, respectively.

\begin{definition}[Semantics of FO over pomsets]
Consider an FO formula~$\phi$, a pomset~$P = (P, \precedence_{P}, \eventorder_{P}, \labelling_{P}, S_P, T_P)$, and a valuation $\nu$ mapping variables to events of~$P$.
Then, the satisfaction relation $P,\nu \models \phi$ is defined inductively as follows:
    \begin{alignat*}{4}
        &P, \nu \models a(x)  &\;&\text{ if } \labelling_P (\nu(x)) = a &\qquad\qquad
        &P, \nu \models \exists x.\phi &\;&\text{ if } \exists p \in P.\;\; P, \nu[x \mapsto p] \models \phi\\
        &P, \nu \models \fostarter(x) &&\text{ if } \nu(x) \in S_{P} &
        &P, \nu \models \foterminator(x)  &&\text{ if } \nu(x) \in T_{P} \\
        &P, \nu \models \neg \phi &&\text{ if } P, \nu \not\models \phi &
        &P, \nu \models \phi \wedge \psi &&\text{ if }  P, \nu \models \phi \text{ and } P, \nu \models \psi \\
        &P, \nu \models x \precedence y  &&\text{ if } \nu(x) \precedence_{P} \nu(y) &
        &P, \nu \models x \eventorder y  &&\text{ if } \nu(x) \eventorder_{P} \nu(y)
    \end{alignat*}
When $\phi$ has no free variable, we write $P \models \phi$, and $\Lang(\phi) = \{P \in \iIPoms \mid P \models \phi \}$.
\end{definition}

%\subsection{Downward-closure does not preserve aperiodicity}
%\label{sec:aperiodicity:down-clos}

%Now that we have shown that an aperiodic module cannot trivially be translated into a counter-free HDA, let us formulate a
%higher-level remark about subsumption-downclosure and aperiodicity. All results over the expressivity of aperiodic models in
%pomsets hold without consideration for the downclosure of the language (see~\cite{ClementEL26} and \Cref{thm:aper-cat-fo}).
%We show here that downclosure does not preserve aperiodicity. This asserts that a whole class of aperiodic languages cannot be resognized by
%counter-free HDA because their downclosure is not aperiodic.
%
FO-definable languages may not be downward-closed.
In fact, the class of FO-definable languages is not closed under downward-closure, as shown in \cref{prop:FO-not-downward-closed}.
Thus, when comparing expressivity of HDA models  with logical (or algebraic) characterizations, we will focus on the class of downward-closed FO-definable languages.

\begin{proposition}
    \label{prop:FO-not-downward-closed}
    The language $\conclist{a\\a}^*$ is FO-definable, but $\downclosure{\conclist{a\\a}^*}$ is not.
\end{proposition}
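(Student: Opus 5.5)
The plan has two halves: write an explicit first-order sentence for $\conclist{a\\a}^*$, and then show that $\downclosure{\conclist{a\\a}^*}$ is not FO-definable by cutting it down to sequential pomsets, where it becomes the classical non-aperiodic word language $(aa)^*$.

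\textbf{Step 1: the pomsets in $\conclist{a\\a}^*$.} Since $\conclist{a\\a}$ has empty interfaces, gluing $n$ copies just places them in sequence. The resulting pomset $P_n$ has $2n$ events, all labelled $a$, and empty interfaces. Its events split into $n$ pairs $\{x_i, y_i\}$ with $x_i \eventorder y_i$, and $p \precedence q$ holds exactly when the pair of $p$ comes strictly before the pair of $q$. The event order is only the union of the event orders of the pieces, so it relates only the two members of a pair. The empty pomset is $P_0$.

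\textbf{Step 2: an FO sentence for $\conclist{a\\a}^*$.} Write $x \parallel y$ for $x \neq y \wedge \neg(x \precedence y) \wedge \neg(y \precedence x)$. Let $\varphi$ be the conjunction of:
\begin{itemize}
\item every event is labelled $a$ and lies in neither interface;
\item every event $x$ has exactly one $y$ with $x \parallel y$;
\item $x \eventorder y$ implies $x \parallel y$.
\end{itemize}
Every $P_n$ satisfies $\varphi$. Conversely, take a pomset satisfying $\varphi$.
\begin{itemize}
\item The relation ``equal or incomparable'' is an equivalence relation whose classes have size $2$.
\item For any third event $z$ that is comparable to both members $x, y$ of a class, $z$ lies on the same side of $x$ and $y$. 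Indeed, $x \precedence z \precedence y$ would give $x \precedence y$.
\item Hence the classes are totally ordered by $\precedence$.
\item By the definition of pomsets, the two members of a class are related by $\eventorder$ in one direction, and no other pairs are related by $\eventorder$.
\end{itemize}
So the pomset is isomorphic to $P_n$.

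\textbf{Step 3: reduce the downward-closure to words.} Assume for contradiction that $\downclosure{\conclist{a\\a}^*} = \Lang(\psi)$. Let $\chi$ be the FO sentence expressing that $\precedence$ is total, $\eventorder$ is empty, every label is $a$, and both interfaces are empty. Then $\Lang(\psi \wedge \chi)$ is the set of sequential pomsets $a^m$ lying in the downward-closure. A totally ordered $a^m$ is subsumed by some $P_n$ exactly when $m = 2n$: take a linearisation of $P_n$ that orders each pair according to $\eventorder$.

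The one point to check carefully against the subsumption definition is that event order only has to be respected between events that are concurrent in the less concurrent pomset. This is the standard convention from the cited literature, and it is what makes sequential linearisations legitimate members of the downward-closure.

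\textbf{Step 4: the contradiction.} On totally ordered pomsets with empty event order and empty interfaces, the predicates $\eventorder$, $\fostarter$ and $\foterminator$ are trivially false and $\precedence$ is the linear order of positions. So $\psi \wedge \chi$ translates directly into an FO sentence over words $a^m$ defining $(aa)^*$. This contradicts the classical fact that $(aa)^*$ is not FO-definable. That fact follows from McNaughton--Papert, since its syntactic monoid $\mathbb{Z}/2$ is not aperiodic, or from an Ehrenfeucht--Fraïssé argument showing $a^{2^k} \equiv_k a^{2^k+1}$.

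The main obstacle is Step 2, where the interval-order argument ensures $\varphi$ has no unintended models. Step 3's dependence on the precise subsumption convention also needs care.
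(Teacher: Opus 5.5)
Your proof is correct and follows the same route as the paper. Both give an FO sentence built on ``every event has exactly one concurrent partner'', then intersect with the FO-definable set of sequential $a$-pomsets to obtain $(aa)^*$, which is not FO-definable. Your sentence is also more careful than the paper's $\forall x.\, a(x) \wedge \exists! y.\, x \parallel y$: your extra clauses on interfaces and on event order exclude models the paper's formula admits, such as a discrete pomset with two $a$'s in the target interface.
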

\begin{proof}
    The language $\conclist{a\\a}^*$ is recognized by the following FO formula: $\forall x. a(x) \wedge \exists! y. x \parallel y$,
    where $\exists!$ means ``exists unique'', defined as usual, % and is defined by $\exists! x. \varphi := \exists x. (\varphi \wedge \forall y. \varphi[x \rightarrow y] \Rightarrow x=y)$
    and $x \parallel y := \neg(x<y \vee y<x \vee x = y)$.
    %It is therefore FO-definable.
    However, $\downclosure{\conclist{a\\a}^*} \cap\; a^* = (aa)^*$, which is not FO-definable~\cite{DiekertG08}. Since FO-definable languages are closed under intersection, $\downclosure{\conclist{a\\a}^*}$ cannot be defined in FO.
\end{proof}

%% file: recognizability.tex
\section{Recognizability for HDA languages}
\label{sec:recognizability}

In this section, we propose a notion of algebraic recognizability for pomset languages.
Over words, a language $L \subseteq \Sigma^*$ is recognized by a finite monoid $M$
if there exists a monoid morphism $\varphi: \Sigma^* \rightarrow M$ and a subset $J \subseteq M$ such that $L = \varphi^{-1}(J)$.
It is a well-known result that recognizable languages are exactly the regular languages~\cite{Eilenberg74}.
In higher dimension, the set $\iIPoms$ of all pomsets over a given alphabet has the structure of a category rather than a monoid.
The objects are interfaces, and the arrows are pomsets whose domain and codomain are the starting and terminating interfaces, respectively.
%
%The notion of recognizability can in fact be defined for a language in any small category $\cC$.
Thus, we get the following notion of recognizability for pomset languages:

%\begin{definition}
%    A language $L\subseteq \cC$ is \emph{recognizable} if there exists a finite category $\cD$, a language $K\subseteq \cD$ and a functor $F:\cC\to\cD$ such that $L=F^{-1}(K)$.
%\end{definition}

\begin{definition}
  A language $L\subseteq \iIPoms$ is \emph{recognizable} if there exists a finite category $\cD$, a set of morphisms $K\subseteq \Mor{\cD}$, and a functor $F:\iIPoms\to\cD$ such that $L = F^{-1}(K)$.
\end{definition}

%In the case of pomsets, $L \subseteq \iIPoms$ is recognizable if there exists a finite category $\cD$, a language $K \subseteq \cD$ and a functor $F:\cC\to\cD$ such that $L=F^{-1}(K)$.
%
The rest of the section is dedicated to proving the following \Cref{thm:recognizable}.
The first implication (\Cref{lem:reg-to-rec}) follows the standard construction of the syntactic monoid.
The second implication (consequence of \Cref{prop:MN-pres}) is less straightforward and requires the notion of $\iIPoms$-module.

\begin{theorem}
  \label{thm:recognizable}
  A pomset language $L \subseteq \iIPoms_{\leq k}$ is regular if and only if it is recognizable.
\end{theorem}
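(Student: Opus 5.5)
For the forward direction (regular $\Rightarrow$ recognizable) I will build a finite ``transition category'' directly from an HDA $X$ with $\Lang(X) = L$. Since $L$ is regular it is downward-closed, and $\dim(X) \leq k$ may be assumed. Define a category $\cD$ whose objects are the conclists of length $\leq k$, together with one extra object to absorb everything of higher dimension. A morphism $U \to V$ in $\cD$ is a relation $R \subseteq X[U] \times X[V]$, and composition is relational composition. The functor $F$ sends a pomset $P \in \iIPoms(U,V)$ to
\[
R_P = \{(x,y) \mid \text{there is a path } \alpha \text{ from } x \text{ to } y \text{ with } P \subsumes \ev(\alpha)\}.
\]
Pomsets of dimension $> k$ are sent to the empty relation, or to a zero morphism into the extra object. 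Take $K$ to be the set of relations meeting $(\bot \times \top)$. Downward-closure of $\Lang(X)$ then gives $L = F^{-1}(K)$. The crux is functoriality, namely $R_{P\glue Q} = R_P \comp R_Q$. The inclusion $\supseteq$ follows by concatenating paths, because gluing is monotone for subsumption. The inclusion $\subseteq$ needs a path-splitting lemma: if $P \glue Q \subsumes \ev(\alpha)$, then $\alpha$ can be replaced by an equivalent path $\beta \ast \gamma$ with $P \subsumes \ev(\beta)$ and $Q \subsumes \ev(\gamma)$, passing through a cell of type $V$. I expect this lemma to be the main obstacle. I would first look for it in the existing HDA literature~\cite{FahrenbergJSZ21,FahrenbergJSZ24}, where paths are considered up to the equivalence generated by merging and splitting steps. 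If that fails, I would prove it by induction on the ST-decomposition of $P$, using that every sparse step sequence of $\ev(\alpha)$ can be realized by a path in the same equivalence class. As a fallback, I would replace $R_P$ by the syntactic congruence $P \sim P'$ iff $C\glue P\glue D \in L \Leftrightarrow C\glue P'\glue D \in L$ for all contexts $C,D$. I would then show it has finite index by bounding each class through the relation $R_P$.

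For the converse (recognizable $\Rightarrow$ regular), note first that the statement must be read for downward-closed $L$. Indeed, $\conclist{a\\a}^*$ is recognizable (it is FO-definable, and even directly recognizable), but it is not regular by \cref{prop:FO-not-downward-closed}. So assume $L = F^{-1}(K)$ with $\cD$ finite and $L$ downward-closed. The key observation is that the suffix language $P^{-1}L = \{Q \mid P\glue Q \in L\}$ depends only on $F(P)$:
\[
P^{-1}L = \{Q \mid F(P)\comp F(Q) \in K\}.
\]
Hence $L$ has finitely many suffix languages, at most $\card{\Mor{\cD}}$. Organizing these as an $\iIPoms$-module, i.e.\ the action $F(P) \mapsto F(P) \comp F(Q)$ on the finite sets $\cD(F(W), F(U))$, gives a finite Cayley-like structure.

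This structure is not yet an HDA, because it lacks lower faces. I would then invoke the Myhill--Nerode characterization / canonical suffix presentation of~\cite{FahrenbergZ24}: a downward-closed language of bounded dimension is regular as soon as it has finitely many prefix quotients, and in that case the quotients themselves are the cells of an HDA, with face maps given by gluing starters and terminators. Apart from checking that the hypotheses of that result (bounded dimension, downward-closure, finiteness of quotients indexed by interfaces) hold here, this direction is routine. The remaining effort, which is the content of \Cref{prop:MN-pres}, goes into matching the module formalism to the suffix presentation.
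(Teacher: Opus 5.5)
\textbf{Converse direction.} Your converse is the paper's own argument: finiteness of $\suff(L)$, then the Myhill--Nerode theorem for HDA. The paper gets finiteness from the module homomorphism of \Cref{prop:MN-pres}; you observe directly that $P\backslash L$ is determined by $F(P)$. Strictly, it is determined by the pair $(F(P),T_P)$, since $F$ may identify distinct interfaces. So the bound is $\card{\Mor{\cD}}$ times the number of conclists of length at most $k$, which is still finite.

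Your downward-closure caveat is correct, and the printed statement omits it. Both the Myhill--Nerode theorem and \Cref{lem:pres-to-hda} need $L$ downward-closed. The language $\conclist{a\\a}^*$ is recognizable by \Cref{thm:aper-cat-fo} but not regular. The reason is simply that it contains $\conclist{a\\a}$ but not $aa\subsumes\conclist{a\\a}$; \cref{prop:FO-not-downward-closed} itself is not what shows this.

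One aside is inaccurate but not load-bearing. Lower faces are obtained by deleting events, $\varphi(P)\mapsto\varphi(P-A)$, not by gluing starters. They also need not be well defined on bare quotients, which is why the paper passes to coherent presentations.

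\textbf{Forward direction: a different route.} The paper never makes the transition relation a functor. It uses the exact-label relation $t(P)$ only to show that $t(P)=t(P')$ implies $P\sim P'$. Hence $\Syn(L)$, which recognizes $L$ with functoriality for free, is finite; your fallback is exactly this. Your relational category is an explicit but non-minimal recognizer. The extra object makes it fit the definition of recognizability with domain all of $\iIPoms$: put one morphism in each hom-set involving it, and let composites through it be empty relations. The paper glosses over this point, since $\Syn(L)$ over $\iIPoms$ has infinitely many objects. Both routes rest on the same path-splitting fact; in the paper's route it is needed to cut an accepting path labelled $C\glue P\glue D$ into three pieces.

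\textbf{The splitting lemma as you state it is false}, because of the subsumption. Let $\alpha$ start both events of a square of type $\conclist{a\\b}$ at once and then terminate both, so $\ev(\alpha)=\conclist{a\\b}$. Take $P=a$, $Q=b$ and $V=\emptyset$. Then $P\glue Q=ab\subsumes\ev(\alpha)$. But any path labelled $\conclist{a\\b}$ that passes through a cell of type $\emptyset$ has one factor equal to $\id_\emptyset$. Gluing along an empty interface puts every event of the first factor before every event of the second, while $a$ and $b$ are concurrent. So no path equivalent to $\alpha$ works, and for the same reason your induction along step sequences of $\ev(\alpha)$ cannot succeed.

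The fix is to drop subsumption and let $R_P$ be the set of pairs $(x,y)$ joined by a path labelled exactly $P$; this is the paper's $t(P)$. Then $L=F^{-1}(K)$ holds by the very definition of $\Lang(X)$. Functoriality then needs only the standard exact decomposition lemma: a path labelled $P\glue Q$ is equivalent to some $\beta\ast\gamma$ with $\ev(\beta)=P$ and $\ev(\gamma)=Q$. This is what the paper uses implicitly in \Cref{lem:reg-to-rec}. Alternatively, keep subsumption but first replace $\alpha$ by a different path with the same endpoints labelled exactly $P\glue Q$. This uses that labels of paths between two fixed cells form a downward-closed set, and it also shows your $R_P$ coincides with $t(P)$.
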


\subsection{From regularity to recognizability}

To show that any regular language is recognizable, we use a construction similar to the usual syntactic monoid.
Taking into account the interfaces, we describe in \Cref{def:synt-cat} the \emph{syntactic category} of a language.
Then, we prove in \Cref{lem:reg-to-rec} that if the language is regular, then its syntactic category is finite.
Since any language is recognized by its syntactic category, this concludes the first direction of \Cref{thm:recognizable}.

The definition of the syntactic category can be formulated for any category~$\cC$. We will then specialize to the case of $\cC = \iIPoms$.
Given a language $L \subseteq \Mor{\cC}$ and two morphisms $\alpha, \beta : U \to V$ in~$\cC$, we write $\alpha \sim \beta$ when $\forall \gamma,\delta.\; \gamma\alpha\delta\in L \Leftrightarrow \gamma\beta\delta\in L$.
This is an equivalence relation on $\cC(U,V)$, called the syntactic congruence.

\begin{definition}
  \label{def:synt-cat}
  The \emph{syntactic category} $\Syn(L)$ of a language $L\subseteq \Mor{\cC}$ has objects $\Ob{\Syn(L)}=\Ob{\cC}$, and morphisms
  \[
    \Syn(L)(U,V)
    =
    \cC(U,V)/{\sim}
    \]
    with the composition $[\alpha]\comp[\beta]=[\alpha\comp\beta]$.
\end{definition}

The syntactic category comes with an obvious projection $\pi_L:\cC\to\Syn(L)$ and a language
\[K = \{[\alpha]\mid \alpha\in L\}\subseteq \Syn(L)\]
such that $\pi_L^{-1}(K)=L$.

\begin{lemma}
  \label{lem:reg-to-rec}
  If $L\subseteq \iIPoms_{\leq k}$ is regular, then $Syn(L)$ is finite. As a consequence, $L$ is recognizable.
\end{lemma}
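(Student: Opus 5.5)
Fix a finite HDA $X$ with $\Lang(X)=L$. We bound the number of syntactic classes by assigning to each pomset a "transition profile" in $X$, exactly as for the transition monoid of a finite automaton. For a pomset $\alpha\in\iIPoms(U,V)$, define the relation
\[
  R_\alpha=\{(x,y)\in X[U]\times X[V] \mid \text{there is a path } \beta \text{ in } X \text{ from } x \text{ to } y \text{ with } \ev(\beta)=\alpha\}.
\]
Since $X$ is finite, each $R_\alpha$ is a subset of the finite set $X\times X$. There are therefore at most $2^{|X|^2}$ possible profiles for each pair of objects $(U,V)$. Only finitely many objects occur in $\iIPoms_{\leq k}$. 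Conclists not of the form $\ev(x)$ contribute at most one extra class: their pomsets never appear inside an element of $L$, or are all equivalent to "dead" morphisms. So it suffices to show that $R_\alpha=R_\beta$ implies $\alpha\sim\beta$.

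The key step is compositionality of paths: for $\gamma\in\iIPoms(U',U)$ and $\delta\in\iIPoms(V,V')$, we want
\[
  R_{\gamma\alpha\delta}=R_\gamma\comp R_\alpha\comp R_\delta
\]
(relational composition). The inclusion $\supseteq$ is immediate from concatenating paths and the inductive definition of $\ev$ on concatenations. The inclusion $\subseteq$ needs a decomposition lemma: if a path $\pi$ has $\ev(\pi)=P\glue Q$, then $\pi$ can be split, up to the path equivalence that preserves $\ev$, as $\pi_1\ast\pi_2$ with $\ev(\pi_1)=P$ and $\ev(\pi_2)=Q$. This is the main obstacle. The gluing point of $P\glue Q$ corresponds to a cut of the interval order, and a path may reach it only after reordering steps. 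Some care is needed, because a single step $\upstep{A}$ or $\downstep{A}$ may straddle the cut.

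To handle this, I would use the standard fact from the HDA language theory~\cite{FahrenbergJSZ21} that paths are determined up to the equivalence generated by $(x,\upstep{A},y,\upstep{B},z)\simeq(x,\upstep{A\cup B},z)$ and its dual, and that the pomsets of paths correspond to ST-sequences (sequences of starters and terminators). With that in hand, I would pass to a sparse decomposition of $P\glue Q$ that refines the cut. Each elementary starter or terminator step of $\pi$ can be subdivided along the precubical faces of the cell, since faces exist by the HDA axioms, so that the path passes through a cell of type $V$ exactly at the cut.

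An alternative, if this lemma is awkward, is to define $R_\alpha$ directly via ST-decompositions. Every pomset of dimension $\le k$ is a gluing of starters and terminators, and a path reads a pomset iff it reads some ST-decomposition of it. Compositionality then follows from the fact that the ST-decompositions of $P\glue Q$ are, up to the equivalence above, concatenations of ST-decompositions of $P$ and of $Q$.

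Given compositionality, acceptance of $\gamma\alpha\delta$ is equivalent to $R_\gamma\comp R_\alpha\comp R_\delta$ meeting $\bot\times\top$. This depends on $\alpha$ only through $R_\alpha$, so $R_\alpha=R_\beta$ implies $\alpha\sim\beta$. Hence each hom-set of $\Syn(L)$ is finite, and $\Syn(L)$ has finitely many objects once restricted to conclists of size at most $k$ (plus the trivial dead part), so $\Syn(L)$ is finite. Recognizability follows with $F=\pi_L$ and the language $K$, since $\pi_L^{-1}(K)=L$.
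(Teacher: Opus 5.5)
Your proposal is correct and follows the paper's proof. The paper uses exactly the same transition profile $t(P)$ (your $R_\alpha$), bounds the number of profiles by the finiteness of $X$, and asserts without justification that equal profiles imply syntactic equivalence. You correctly identify that this implication rests on the standard path-splitting lemma (a path reading $P\glue Q$ is equivalent to a concatenation of paths reading $P$ and $Q$). Your separate treatment of conclists that are not cell types is unnecessary, since there $R_\alpha=\emptyset$ and the general argument already covers them.
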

\begin{proof}
  Let $X$ be a finite HDA such that $\Lang(X)=L$. For $P\in\iIPoms(U,V)$, let
  \[t(P)=\{(x,y)\in X[U]\times X[V]\mid \text{there is a path }\alpha \text{ s.t.\ } \src(\alpha) = x, \tgt(\alpha) = y \text{ and } \ev(\alpha) = P \}.\]
  First, $t(P)=t(P')$ implies $P\sim P'$; second, there is only a finite number of possible values of~$t$, so a finite number of equivalence classes for $\sim$.
\end{proof}
  
\subsection{From recognizability to regularity}

Now that we have established that any regular language is recognizable, we aim to prove the converse. %, stated in \Cref{lem:rec-to-reg}.
For languages of words, translating a finite monoid $M$ together with a homomorphism $\varphi: \Sigma^* \to M$ into a finite automaton is fairly straightforward: the states of the automaton are the elements of
the monoid, and the transitions are given by $x \overset{a}{\rightarrow} y$ if $x \cdot \varphi(a) = y$.
For pomset languages, however, the situation is different.
Given a functor $F : \iIPoms \to \cD$ into a finite category~$\cD$, the analogue of the above construction does not yield an HDA directly.
Instead, what we get is what we call an $\iIPoms$-module, that is, a set equipped with a right action of $\iIPoms$.
This can then be turned into an HDA, but it requires some work.

\begin{definition}
  Let $\cC$ be a small category. A $\cC$-module is a set $M$ equipped with
  \begin{itemize}
  \item Functions $\src,\tgt:M\to\Ob{\cC}$. For $U,V\in\Ob{\cC}$ let $M(U,V)=\src^{-1}(U)\cap \tgt^{-1}(V)$.
  \item Functions $\mu_{U,V,W}:M(U,V)\times \cC(V,W)\to M(U,W)$. We write $\mu$ for the partial function $M\times\Mor{\cC}\to M$, and $m\cdot \alpha$ for $\mu(m,\alpha)$.
  \end{itemize}
  such that, for all $m \in M(U, V)$, $\alpha \in \cC(V, W)$, and $\beta \in \cC(W, T)$,
  \begin{itemize}	
  \item $m\cdot \id_V = m$, and
  \item $(m\cdot\alpha)\cdot\beta=m\cdot (\alpha\comp\beta)$.
  \end{itemize}
\end{definition}

\begin{definition}
	A homomorphism of $\cC$-modules is a function $\varphi: M \rightarrow M'$ such that for any $m \in M$ and $\alpha \in \Mor{\cC}$, $\src(\varphi(m)) = \src(m)$, $\tgt(\varphi(m)) = \tgt(m)$ and $\varphi(m \cdot \alpha) = \varphi(m) \cdot \alpha$.
\end{definition}

\begin{remark}
	In other words, a $\cC$-module $M$ is a family of functors $M(U,-):\cC\to \Set$ indexed by $U\in \Ob\cC$, and a homomorphism of $\cC$-modules $M\to M'$ is a family of natural transformations $M(U,-)\Rightarrow M'(U,-)$. 
\end{remark}

%Intuitively, a module is a simplified category where the only action is a right-hand composition with pomsets. 
%This simplified definition is actually sufficient to define a notion of recognizability by a module.

Notice that, for $\cC = \Sigma^*$ viewed as a category with one object, a $\Sigma^*$-module is a set $M$ together with a right monoid action of $\Sigma^*$, that is, a deterministic finite state automaton.
Here, we are interested in $\iIPoms$-modules instead.
For instance, the syntactic category $\Syn(L)$ is an $\iIPoms$-module, with $[P] \cdot Q = [P\glue Q]$.
More generally, any functor $F: \iIPoms_{\leq k} \to \cD$ can be viewed as an $\iIPoms_{\leq k}$-module, by taking $M(U,V) = \{ F(P) \mid P \in \iIPoms_{\leq k}(U, V) \}$ for all $U, V$, and then letting $m \cdot Q = m \comp F(Q)$.
Finally, $\iIPoms_{\leq k}$ itself is an $\iIPoms_{\leq k}$-module, with the right action $P \cdot Q = P \glue Q$.

\begin{definition}
	A language $L\subseteq \iIPoms_{\leq k}$ is recognized by an $\iIPoms_{\leq k}$-module $M$ if there exists a subset $J\subseteq M$ and a homomorphism of $\iIPoms_{\leq k}$-modules $\varphi:\iIPoms_{\leq k}\to M$ such that $L=\varphi^{-1}(J)$.
\end{definition}

%In the case of pomset languages, $L$ is recognized by an $\iIPoms$-module if there is a module homomorphism $\varphi$ and a set $J \subseteq M$ such that
%$L = \varphi^{-1}(J)$. 
Note that in the above definition, the morphism of $\iIPoms_{\leq k}$-modules $\varphi:\iIPoms_{\leq k}\to M$ is entirely determined by the image of the identity pomsets $\id_{U} \in \iIPoms_{\leq k}(U, U)$ for each interface~$U$.
These play the role of the ``initial states'', if we think of the $\iIPoms_{\leq k}$-module~$M$ as some sort of deterministic automaton.
When a language $L$ is recognized by an $\iIPoms_{\leq k}$-module, we call the tuple $(M, J, \varphi)$ a \emph{presentation} of $L$.

\begin{example}
\label{ex:presentation}
A presentation for the language $L=\downclosure{\conclist{a\\b}} +\ abc$ is given in \Cref{fig:pres-example-pres}. Each element of the module $M$
is represented by a node in the graph. The labels inside a node indicate the preimage of the element by~$\varphi$. The elements
of $J$ are represented by adding a double border to their nodes. The arrows
represent the action of some starter or terminator.
For example, from the element $\phi(b\interface)$, we can either terminate $b$ by acting on the right by $\interface b$, to reach $\phi(b)$; or start $a$ by acting on the right by $\conclist{\pinterface a\interface\\ \interface b \interface}$, to reach $\phi(\conclist{a\interface\\b\interface})$.
The elements drawn in gray are those from which no element of $J$ is accessible. Their preimage by $\phi$ is infinite. Some such elements have been omitted, e.g., $\phi(c\interface)$.
One can notice that a finite presentation is essentially equivalent to a deterministic complete ST-automaton (see \cref{sec:ST-automata}).
\begin{figure}[h]
    \centering
    \begin{tikzpicture}[scale=2, nodes={draw,align=center}, font=\footnotesize]
        \node at (0,0)                      (v00)   {$\varepsilon$};
        \node at (0,1)                      (v01)   {$b\interface$};
        \node at (0,2)                      (v02)   {$b$};
        \node at (1,0)                      (v10)   {$a\interface$};
        \node at (1,1)                      (v11)   {$\conclist{a\interface\\b\interface}$};
        \node at (1,2)                      (v12)   {$\conclist{a\interface\\b\pinterface},$\\$ba\interface$};
        \node at (2,0)     (v20)   {$a$};
        \node at (2,1)                      (v21)   {$\conclist{a\pinterface\\b\interface}$};
        \node [double] at (2,2)     (v22)   {$\conclist{a\\b}, ba,$\\$abc$};
        \node at (3,0)                      (v30)   {$ab\interface$};
        \node [double] at (3,1)            (v31)   {$ab$};
        \node at (3, 2)                   (v32)     {$abc\interface$};
        % \node [double] at (4,0)     (v40)   {$ab, \dots$};
        % \node at (4,0.9)                      (v41)   {$abb\interface,\dots$};
        \def\cq{gray!80}
        \node[color=\cq] at (5,0.2) (v40)	{$aa\interface, \ldots$}; 
        \node[color=\cq] at (5.2,1) (v41)	{$aa, bb, \ldots$}; 
        \node[color=\cq] at (5,1.8) (v42)	{$bb\interface, \ldots$}; 

        \draw[-latex, semithick]    (v00) -- node[left, draw=none] {$b\interface$}  (v01);
        \draw[-latex, semithick]    (v10) -- node[left, draw=none] {$\conclist{\interface a\interface\\ \pinterface b\interface}$} (v11);
        \draw[-latex, semithick]    (v01) -- node[left, draw=none] {$\interface b$} (v02);
        \draw[-latex, semithick]    (v11) -- node[left, draw=none] {$\conclist{\interface a\interface\\ \interface b\pinterface}$} (v12);
        \draw[-latex, semithick]    (v21) -- node[left, draw=none] {$\interface b$} (v22);
        \draw[-latex, semithick]    (v00) -- node[above, draw=none] {$a\interface$} (v10);
        \draw[-latex, semithick]    (v01) -- node[above, draw=none] {$\conclist{\pinterface a\interface\\ \interface b\interface}$} (v11);
        \draw[-latex, semithick]    (v02) -- node[above, draw=none] {$a\interface$} (v12);
        \draw[-latex, semithick]    (v10) -- node[above, draw=none] {$\interface a$} (v20);
        \draw[-latex, semithick]    (v11) -- node[above, draw=none] {$\conclist{\interface a\pinterface\\ \interface b\interface}$} (v21);
        \draw[-latex, semithick]    (v12) --  node[above, draw=none] {$\interface a$} (v22);
        \draw[-latex, semithick]    (v20) --  node[above, draw=none] {$b\interface $} (v30);
        \draw[-latex, semithick]    (v30) --  node[left, draw=none] {$\interface b$} (v31);
        \draw[-latex, semithick]    (v31) --  node[left, draw=none] {$c \interface$} (v32);
        \draw[-latex, semithick]    (v32) --  node[above, draw=none] {$\interface c$} (v22);
        \draw[bend left=30,-latex,color=\cq]    (v40) edge node[left, draw=none] {$\interface a$} (v41);
        \draw[bend left=30,-latex,color=\cq]    (v41) edge node[right, draw=none] {$a\interface$} (v40);
        \draw[bend left=30,-latex,color=\cq]    (v42) edge node[right, draw=none] {$\interface b$} (v41);
        \draw[bend left=30,-latex,color=\cq]    (v41) edge node[left, draw=none] {$b\interface$} (v42);
        \draw[-latex,color=\cq] (v31) edge node[above, draw=none] {$b\interface$} (v42);
        \draw[-latex, bend left=30, color=\cq] (v02) edge node[above, draw=none] {$b\interface$} (v42);
        \draw[-latex,color=\cq] (v31) edge node[above, draw=none] {$a\interface$} (v40);
        \draw[-latex, bend right=20,color=\cq] (v20) edge node[above right, draw=none] {$a\interface$} (v40);
        \draw[-latex, bend left=20,color=\cq] (v22) edge node[above, draw=none] {$b\interface$} (v42);
        \draw[-latex, bend left=10,color=\cq] (v22) edge node[above right=-2pt, draw=none, pos=0.7] {$a\interface$} (v40);
        % \draw[-latex, semithick, bend left]    (v41) edge (v40);
%        \draw[-latex, semithick] (v00) -- (v11);
%        \draw[-latex, semithick] (v11) -- (v22);
    \end{tikzpicture}
    \caption{A presentation for $L=\downclosure{\conclist{a\\b}} +\ abc$.}
    \label{fig:pres-example-pres}
\end{figure}
\end{example}

Importantly, notice that one may always assume that the morphism $\phi : \iIPoms_{\leq k} \to M$ is surjective.
If that is not the case, we can simply discard the elements of $M$ that are not in the image of $\phi$, so that $(\phi(\iIPoms_{\leq k}), J \cap \phi(\iIPoms_{\leq k}), \phi)$ is a surjective presentation of $L$.

\begin{lemma}
  \label{lem:cat-to-mod}
  Any language $L \subseteq \iIPoms_{\leq k}$ recognizable by a finite category can be recognized by a finite $\iIPoms_{\leq k}$-module.
\end{lemma}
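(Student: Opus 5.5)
Given a finite category $\cD$, a set $K \subseteq \Mor{\cD}$ and a functor $F : \iIPoms_{\leq k} \to \cD$ with $L = F^{-1}(K)$, I will build the module directly from the morphisms of $\cD$ that lie in the image of $F$. The remark after the definition of modules already suggests this: a functor into $\cD$ can be viewed as an $\iIPoms_{\leq k}$-module. Concretely, for conclists $U, V$ of length at most $k$, set
\[
  M(U,V) = \{ F(P) \mid P \in \iIPoms_{\leq k}(U,V) \} \subseteq \cD(F(U),F(V)).
\]
I define $m \cdot Q = m \comp F(Q)$ for $m \in M(U,V)$ and $Q \in \iIPoms_{\leq k}(V,W)$. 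To keep the pieces $M(U,V)$ disjoint even when $F$ identifies objects, I take $M$ to be the disjoint union of the sets $M(U,V)$, tagging each element with $(U,V)$. This defines $\src$ and $\tgt$.

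First I check that the action is well defined. If $m = F(P)$ with $P \in \iIPoms_{\leq k}(U,V)$, then $m \comp F(Q) = F(P \glue Q)$ by functoriality. So the result lies in $M(U,W)$, provided $P \glue Q$ is still in $\iIPoms_{\leq k}$; since $\iIPoms_{\leq k}$ is treated as a category in the statement, I take closure under gluing as given. The module axioms follow from functoriality: $m \cdot \id_V = m \comp \id_{F(V)} = m$, and $(m\cdot Q)\cdot R = m \comp F(Q) \comp F(R) = m \comp F(Q \glue R) = m \cdot (Q\glue R)$. Finiteness holds because $\iIPoms_{\leq k}$ has finitely many objects and each $M(U,V)$ is a subset of the finite set $\Mor{\cD}$.

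Next, the homomorphism is $\varphi : \iIPoms_{\leq k} \to M$, $\varphi(P) = F(P)$ tagged with $(S_P, T_P)$. It preserves sources and targets by construction, and it satisfies $\varphi(P \glue Q) = F(P) \comp F(Q) = \varphi(P)\cdot Q$. Finally, set $J = \{ m \in M \mid m \in K\}$, that is, the tagged elements whose underlying morphism lies in $K$. Then $\varphi^{-1}(J) = \{P \mid F(P) \in K\} = F^{-1}(K) = L$, so $(M, J, \varphi)$ is a finite presentation of $L$, and it is even surjective.

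There is no real obstacle here. The only point requiring care is the bookkeeping that makes $\src$ and $\tgt$ well defined as maps to $\Ob{\iIPoms_{\leq k}}$ rather than to $\Ob{\cD}$, and the tagging with $(U,V)$ handles this.
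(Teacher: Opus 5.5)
Your proof is correct and takes the same route as the paper: you view the morphisms of $\cD$ in the image of $F$ as an $\iIPoms_{\leq k}$-module with action $m \cdot Q = m \comp F(Q)$, use $F$ itself as the module homomorphism, and pull back $K$. Tagging elements with $(U,V)$ is extra bookkeeping the paper leaves implicit; it ensures $\src$ and $\tgt$ are well defined even when $F$ sends morphisms with different interfaces to the same morphism of $\cD$.
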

\begin{proof}
%  Fix a finite category $\cD$, a full \ee{is this the right word ? I want an injective functor} functor $F: \cC\to\cD$ and a set $K \subseteq \cD$ such that $L = F^{-1}(K)$.
%  Now, let us show that $(\cD, K, F)$ is a presentation for $L$.
%  First, the morphisms of $\cD$ are a $\cC$-module. Indeed, given $F(x) \in \cD$, we can canonically fix $\src(F(x)) = \src(x)$ and $\tgt(F(x)) = \tgt(x)$.
%  Additionally, there is a canonical composition on the right: given $x,y \in \cC$ with $\tgt(x) = \src(y)$, $F(x) \cdot y = F(x \circ y)$, where $\circ$ is the composition in $\cC$.
Suppose $L$ is recognized by a functor ${F : \iIPoms_{\leq k} \to \cD}$ into a finite category, with $K \subseteq \Mor{\cD}$.
As we have seen, this gives rise to an $\iIPoms_{\leq k}$-module whose elements are the morphisms of $\cD$ in the image of~$F$.
Then $F$ can be seen as a morphism of $\iIPoms_{\leq k}$-modules, and since $F^{-1}(K) = L$, this concludes the proof.
\end{proof}

\paragraph*{Canonical suffix presentation}
%Over word languages, the Myhill-Nerode theorem ensures that there exists a minimal deterministic automaton whose states corresponds to all suffix languages $w \backslash L = \{v \mid m\cdot v \in L\}$.
%Modules over $\iIPoms$ being structurally similar to deterministic automaton, there exists a similar \emph{Myhill-Nerode presentation} for pomset languages, as explicited by \Cref{prop:MN-pres}.
The reason for introducing the notion of $\iIPoms_{\leq k}$-module is that every finite presentation of a pomset language can be turned into a \emph{canonical suffix presentation}.
Given a language $L \subseteq \iIPoms_{\leq k}$ and pomset $P$, the \emph{prefix quotient} $P \backslash L$ is defined by $P \backslash L = \{Q \in \iIPoms_{\leq k} \mid P * Q \in L\}$.
Since pomset languages have a Myhill-Nerode theorem~\cite{FahrenbergZ24}, a pomset language $L$ is regular if and only if the set $\suff(L) = \{P \backslash L \mid P \in \iIPoms_{\leq k}\}$ is finite.
While the set $\suff(L)$ does not have the structure of a category, it can be equipped with the structure of an $\iIPoms_{\leq k}$-module.
This module can then be turned into an HDA using the construction from~\cite{FahrenbergZ24}.

\begin{proposition}
  \label{prop:MN-pres}
  Let $L \subseteq \iIPoms_{\leq k}$ be a language.
  Then, the presentation $(\suff(L), J, \varphi)$, called the canonical suffix presentation of~$L$, defined by:
  \begin{itemize}
    \item %For $P \backslash L \in \suff(L)$,
    $\src(P\backslash L) = S_P$ and $\tgt(P\backslash L) = T_P$,
    \item %For $P\backslash L \in \suff(L)_U$\ee{notation = $\tgt^{-1}(U)$} and $Q \in {}_U\iIPoms_{\leq k}$,
    $P\backslash L \cdot Q = (P\glue Q)\backslash L$,
    \item $J = \{P \backslash L \mid P \in L\}$,
    \item $\varphi(P) = P \backslash L$,
  \end{itemize}
  recognizes $L$. Additionally, for any presentation $(M, K, \psi)$ recognizing $L$ where $\psi$ is surjective, there exists a module homomorphism $f : M \to \suff(L)$ such that $(f(M), f(K), f \circ \psi)$ is
  isomorphic to the canonical suffix presentation $(\suff(L), J, \varphi)$ of $L$.
\end{proposition}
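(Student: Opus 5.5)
I would split the proof into two parts: first that $(\suff(L), J, \varphi)$ is a well-defined presentation recognizing $L$, and then the universal property with respect to surjective presentations.

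\emph{Well-definedness.} The concern is that $\src$, $\tgt$ and the action are defined through a representative $P$ of the quotient $P\backslash L$. For the interfaces, I would argue that if $P\backslash L = P'\backslash L$ is nonempty, then any $Q$ in it satisfies $S_Q = T_P = T_{P'}$. The source interface is more delicate, since quotients only see right extensions. I expect the cleanest fix is to index quotients by the pair $(S_P, T_P)$, that is, to regard $\suff(L)$ as the set of triples $(S_P, T_P, P\backslash L)$, as in~\cite{FahrenbergZ24}. This is harmless because $\iIPoms_{\leq k}$ has finitely many objects, so finiteness is preserved. For the action, suppose $P\backslash L = P'\backslash L$. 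Then for any $R$ we have
\[
R \in (P\glue Q)\backslash L \iff P\glue(Q\glue R)\in L \iff Q\glue R \in P\backslash L,
\]
and the last condition is symmetric in $P$ and $P'$, using associativity of gluing. The module axioms then follow directly from $P\glue\id = P$ and from associativity. Finally, $\varphi$ is a homomorphism, since $\varphi(P\glue Q) = (P\glue Q)\backslash L = \varphi(P)\cdot Q$.

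\emph{Recognition.} The inclusion $L\subseteq\varphi^{-1}(J)$ is immediate. For the converse, suppose $\varphi(P) = P'\backslash L$ with $P'\in L$. Then $\id_{T_{P'}} \in P'\backslash L = P\backslash L$, so $P = P\glue\id \in L$. Here the matching of interfaces is what the indexing convention above guarantees.

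\emph{Universal property.} Let $(M,K,\psi)$ be a presentation with $\psi$ surjective. I would define $f(\psi(P)) = P\backslash L$, and the key step is to show this is well defined. If $\psi(P) = \psi(P')$, then for every $Q$,
\[
\psi(P\glue Q) = \psi(P)\cdot Q = \psi(P')\cdot Q = \psi(P'\glue Q),
\]
so $P\glue Q\in L \iff P'\glue Q\in L$, because $L = \psi^{-1}(K)$. Hence $P\backslash L = P'\backslash L$, and the interfaces agree because $\psi$ preserves $\src$ and $\tgt$. The map $f$ is a homomorphism, since $f(\psi(P)\cdot Q) = f(\psi(P\glue Q)) = (P\glue Q)\backslash L = f(\psi(P))\cdot Q$. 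It satisfies $f\circ\psi = \varphi$ and is surjective, so $f(M) = \suff(L)$. Moreover, $f(K) = \{P\backslash L \mid \psi(P)\in K\} = \{P\backslash L \mid P \in L\} = J$. Thus the image presentation coincides with the canonical one, and the isomorphism is the identity.

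The main obstacle is the bookkeeping of source interfaces in the well-definedness step, where empty quotients and equal quotients with different interfaces must be told apart. I would handle it by the indexing convention described above, rather than by any substantive argument.
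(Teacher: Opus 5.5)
Your proposal is correct and follows essentially the same route as the paper. The paper checks the module axioms via $P\glue\id_{T_P}=P$ and associativity, and reads off $L=\varphi^{-1}(J)$. It then defines $f(\psi(P))=P\backslash L$, which is well defined because $\psi(P)=\psi(P')$ forces $P\glue Q\in L \iff P'\glue Q\in L$. Finally it checks $f\circ\psi=\varphi$ and $f(K)=J$.

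Your extra bookkeeping goes beyond what the paper's proof states. You check that the action does not depend on the chosen representative, and you index quotients by $(S_P,T_P)$ so that $\src$ and $\tgt$ are well defined. The paper passes over both points silently, and your fix is harmless since there are only finitely many interfaces of dimension at most $k$.
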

\begin{proof}
  First, $(\suff(L), J, \varphi)$ is well-defined since $P\backslash L \cdot \id_{T_P} = (P\glue\id_{T_P}) \backslash L = P \backslash L$
  and $(P \backslash L \cdot Q) \cdot R = (P\glue Q\glue R)\backslash L = P \backslash L \cdot (Q\glue R)$. Then, it is clear from the definitions that $L = \varphi^{-1}(J)$.
  Now, let us prove that any presentation $(M, K, \psi)$ admits a homomorphism $f$ such that $(f(M), f(K), f\circ\psi)=(\suff(L),J,\varphi)$.

  Fix, for any $m \in M$, $f(m) = \psi^{-1}(m) \backslash L$. This is well-defined since if $\psi(P) = \psi(Q) = m$, then $P \backslash L = \{R \in \iIPoms_{\leq k} \mid m \cdot R \in L\} = Q \backslash L$. 
  First, $f$ is a module homomorphism.
	Observe that $\src(f(\psi(P))) = \src(\psi^{-1}(\psi(P))\backslash L) = \src(P\backslash L) = S_P = \src(\psi(P))$ as $\psi$ is itself a homomorphism. The same goes for $\tgt(f(\psi(P)))$. Now, for any $Q \in \iIPoms_{\leq k}$,
	$f(\psi(P))Q = (\psi^{-1}(\psi(P)) \backslash L)Q = (P \backslash L)Q = PQ \backslash L = \psi^{-1}(\psi(PQ)) \backslash L = f(\psi(PQ)) = f(\psi(P)Q)$.

	It is immediate that $f \circ \psi(P) = \psi^{-1} (\psi(P)) \backslash L = P \backslash L$. Hence, $f(M) = \suff(L)$ and $f\circ \psi = \varphi$.
  Now, $f(K) = \{\psi^{-1}(\psi(P)) \backslash L \mid \psi(P) \in K\} = \{P\backslash L \mid P \in L\} = J$.
\end{proof}

\begin{example}
The presentation from \cref{ex:presentation} is isomorphic to the canonical prefix presentation of $L=\downclosure{\conclist{a\\b}} +\ abc$.
Notice, for instance, that $ab\backslash L = \{\epsilon,c\}$ whereas $ba\backslash L = \{\epsilon\}$.
Therefore, any presentation of $L$ must have two distinct elements $\phi(ab) \neq \phi(ba)$.
\end{example}

\Cref{prop:MN-pres} suffices to conclude that recognizable languages of finite dimension are regular.
Indeed, if we show that the set $\suff(L)$ is finite, then we can conclude by the Myhill-Nerode theorem for HDA~\cite{FahrenbergZ24} that $L$ is regular.
If $L$ is recognizable, it is recognized by a finite presentation $(M, K, \psi)$ where $\psi$ is surjective.
Then, \Cref{prop:MN-pres} says that there is a module homomorphism $f : M \to \suff(L)$ such that $f(M) = \suff(L)$.
Since $M$ is finite, this implies that $\suff(L)$ is also finite.

The proof of the Myhill-Nerode theorem from~\cite{FahrenbergZ24} essentially shows how to construct an HDA from the canonical suffix presentation $(\suff(L), J, \varphi)$.
In \Cref{sec:module-HDA-direct}, we adapt this proof to directly turn any finite module recognizing~$L$ into an HDA accepting the same language.

\subsection{Direct construction from modules to HDA}
\label{sec:module-HDA-direct}

%\jl[inline]{Say that only in this section we assume empty starting interfaces?}

In this section, we give another, more direct proof that recognizable languages are regular.
The construction of the HDA from an $\iIPoms_{\leq k}$-module is a generalized version of the one in~\cite{FahrenbergZ24}.
In this section only, we restrict to languages of pomsets whose source interface is $\emptyset$, meaning that no processes are active at the start of computation.
This restriction is made solely for ease of presentation, as it prevents some technicalities (see~\cite{FahrenbergZ24} for details).
We write $\OiIPoms$ for the set of pomsets whose source interface is $\emptyset$.

%\begin{lemma}
%  \label{lem:rec-to-reg}
%  If $L \subseteq \iIPoms_{\leq k}$ is recognizable, then it is regular.
%\end{lemma}

%The proof of \Cref{lem:rec-to-reg} relies on the fact that an $\iIPoms_{\leq k}$-module is structurally similar to an HDA:
Given a finite presentation $(M, J, \varphi)$, the $U$-cells of the resulting HDA are elements of $M(\emptyset, U)$. Face maps are then
added in the same way as in \cite{FahrenbergZ24}. However, for the down faces to be well-defined, we need an additional condition called coherence.
In the following, for~$P$ a pomset and $A \subseteq P$ a subset of events of~$P$, we write $P-A$ for the subpomset induced by the events of~$P$ not belonging to~$A$.

\begin{definition}
  A presentation $(M, J, \varphi)$ is \emph{coherent} if
	for all $U$ and all $P,Q\in \iIPoms(\emptyset,U)$:
  \[
    \varphi(P)=\varphi(Q) \implies \text{for all } A\subseteq U.\;\; \varphi(P-A)=\varphi(Q-A),
	\]
\end{definition}
  
%If a presentation is incoherent, with $\varphi(P-A) \neq \varphi(Q-A)$ even though $\varphi(P) = \varphi(Q)$, then the down face $d^0_A([P])$ is not unique.

\begin{lemma}
  Any recognizable language has a coherent presentation.
\end{lemma}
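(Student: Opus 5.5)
We build a coherent presentation from the syntactic category (equivalently, from a recognizing functor into a finite category). The point is that the map $P \mapsto P - A$, restricted to pomsets in $\OiIPoms(\emptyset,U)$, is compatible with a congruence that is fine enough. It is not automatically compatible with an arbitrary finite presentation, so we refine it.

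Fix $L$ recognizable, hence regular by \Cref{thm:recognizable}. We may therefore fix a finite HDA $X$ with $\Lang(X) = L$. Define, for $P \in \OiIPoms(\emptyset,U)$, the finite datum
\[
  \tau(P) = \bigl( \{ (x,y) \mid \text{there is a path from } x \text{ to } y \text{ with label } P - A\} \bigr)_{A \subseteq U}.
\]
This is a tuple indexed by subsets $A \subseteq U$; each entry is the reachability relation $t(P-A)$ from the proof of \Cref{lem:reg-to-rec}. We take $M(\emptyset,U)$ to be the set of values of $\tau$ on $\OiIPoms(\emptyset,U)$, which is finite, and set $\varphi = \tau$.

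The main step is to show that $\tau$ defines a module, i.e.\ that $\tau(P)=\tau(P')$ implies $\tau(P\glue Q)=\tau(P'\glue Q)$. For this we use the identity
\[
  (P \glue Q) - A = (P - A_1) \glue (Q - A_2) \quad \text{for } A \subseteq T_Q,
\]
where $A_2 = A$, and $A_1$ is the set of events of $A$ lying in the interface $T_P = S_Q$ (those coming from $P$). Since $t$ is compositional, $t(R\glue S) = t(R)\,t(S)$ as relations because paths concatenate and decompose along gluings. It follows that each component of $\tau(P\glue Q)$ is determined by the components of $\tau(P)$ together with $Q$.

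The obstacle is justifying this decomposition and the decomposition of paths. Removing terminal-interface events that persist through $Q$ must commute with gluing. In $P$, such events are maximal and in $T_P$, so deleting them gives a valid pomset with smaller target interface. We also need every path labelled by a glued pomset to split at a cell of type $T_P - A_1$; this is the standard path-decomposition property for HDA languages from~\cite{FahrenbergJSZ21}, which we would cite. Subpomsets of interval orders are interval orders, and the interfaces remain min/max.

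Recognition then holds with $J = \{\tau(P) \mid P\in L\}$. Indeed, the $A=\emptyset$ component $t(P)$ decides membership via initial and accepting cells, exactly as in \Cref{lem:reg-to-rec}, so $L = \varphi^{-1}(J)$. Coherence is immediate by construction. Since $(P - A) - B = P - (A\cup B)$, the tuple $\tau(P-A)$ is a sub-tuple of $\tau(P)$, reindexed by subsets $B \subseteq U - A$. Hence $\tau(P)=\tau(Q)$ implies $\tau(P-A)=\tau(Q-A)$.
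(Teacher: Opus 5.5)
Your argument is valid, but it takes a different route from the paper, and the detour has a real cost.

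The paper never leaves the given finite presentation $(M,J,\varphi)$. It declares $P\sim Q$ when $S_P=S_Q$, $T_P=T_Q$ and $\varphi(P-A)=\varphi(Q-A)$ for all $A\subseteq T_P$, and then uses $\iIPoms_{\leq k}/{\sim}$. Your $\tau$ is exactly this refinement applied to one particular presentation, $P\mapsto t(P)$, extracted from an HDA.

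In the paper's version, compatibility with gluing comes for free. It follows from your identity $(P\glue Q)-A=(P-A_1)\glue(Q-A)$ together with $\varphi$ being a module homomorphism, since then $\varphi((P\glue Q)-A)=\varphi(P-A_1)\cdot(Q-A)$. In your version, compositionality of $t$ is an additional HDA fact. Moreover, it holds only up to path equivalence, not for every path as you state. Take a path that crosses a square labelled $a\eventorder b$ by $\upstep{\{a,b\}}$ and then $\downstep{\{a,b\}}$. It is labelled $\conclist{a\\b}=a\interface\glue\conclist{\interface a\\ \pinterface b}$, yet it visits no $a$-labelled edge, so it does not split there. What is true is that some path with the same endpoints and label splits, and that suffices for $t$.

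More importantly, consider your appeal to \Cref{thm:recognizable}. It is not logically circular, because the paper has already derived recognizable $\Rightarrow$ regular from \Cref{prop:MN-pres} and the Myhill--Nerode theorem. But it defeats the purpose of the lemma. The lemma exists to feed \Cref{lem:pres-to-hda} in an \emph{alternative, direct} construction of an HDA from a finite module. Presupposing an HDA for $L$ makes that construction redundant. The paper's refinement, by contrast, works for any finite presentation and needs neither regularity nor path decomposition.

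What your write-up buys is an explicit check of the gluing identity. The paper uses that identity silently when it asserts that the quotient is a module.
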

\begin{proof}
  Fix a presentation $(M, J, \varphi)$ recognizing a language $L$.
  Given $P, Q \in \iIPoms_{\leq k}$, we write $P \sim Q$ if $S_P = S_Q$, $T_P = T_Q$ and, for any $A \subseteq T_P$,
	$\varphi(P - A) = \varphi(Q - A)$. Then, $\sim$ is an equivalence relation with a finite number of classes, since $M$ is finite and
	the dimension of $L$ is bounded. Moreover, $\iIPoms_{\leq k}/{\sim}$ is an $\iIPoms_{\leq k}$-module. We now prove that together with $\psi: P \mapsto [P]$ and $K = \varphi^{-1}(J)$,
	it defines a coherent finite presentation for~$L$.

	First, ${}_{\emptyset}\iIPoms_{\leq k} \xrightarrow{\psi} \iIPoms_{\leq k}/{\sim} \supseteq K$ is coherent.
	If $\psi(P) = \psi(Q)$, then by definition $P \sim Q$, which implies that for any $A \subseteq T_P = T_Q$,
	$P - A \sim Q - A$.
	Therefore, $\psi(P - A) = \psi(Q - A)$.
	Finally, it recognizes $L$: $\psi(P) \in K$ if and only if there exists $Q \sim P$ such that $\varphi(Q) \in J$. Since $\varphi(Q) = \varphi(P)$,
	this is equivalent to $P \in L$.
\end{proof}

We can now construct an HDA from a coherent presentation.
Observe that if $(M,J,\phi)$ is the canonical suffix presentation of $L$, the resulting HDA is the Myhill-Nerode HDA of~\cite{FahrenbergZ24}.
\begin{lemma}
  \label{lem:pres-to-hda}
  Let $(M,J,\varphi)$ be a coherent presentation of a downward-closed language $L$, where $\phi$ is surjective.
  %Define a relation $\sim$ on ${}_\emptyset\iIPoms_U$ by $P\sim Q$ if $\varphi(P)=\varphi(Q)$.
  Then $X$ defined by
  \begin{mathpar}
  X[U]=M(\emptyset, U),
  \and 
  \delta^0_A(\phi(P))=\phi(P-A),
  \and
  \delta^1_B(m)=m \cdot \terminator{U}{B},
  \\
  \bot_X=\{\phi(\id_\emptyset)\},
  \and
  \top_X=J,
  \end{mathpar}
  is a well-defined HDA such that $\Lang(X)=L$.
\end{lemma}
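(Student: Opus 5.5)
The plan has two parts. First, check that $X$ is a well-defined precubical set. Then compute the pomsets of paths, proving $\Lang(X)\subseteq L$ and $L\subseteq\Lang(X)$ separately.

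\textbf{Well-definedness.} Since $\varphi$ is surjective, every $m\in X[U]=M(\emptyset,U)$ has the form $\varphi(P)$ with $P\in\iIPoms_{\leq k}(\emptyset,U)$. By coherence, $\delta^0_A(m)=\varphi(P-A)$ does not depend on the choice of $P$. Here $P-A$ removes events of the target interface; these are maximal, so $P-A$ is again a pomset with source $\emptyset$ and target $U-A$. The upper face $\delta^1_B(m)=m\cdot\terminator{U}{B}$ is well defined by the module action. For the precubical identities, take disjoint $A,B$.
\begin{itemize}
\item $\delta^0_A\delta^0_B=\delta^0_B\delta^0_A$ follows from $(P-B)-A=(P-A)-B$.
\item $\delta^1_A\delta^1_B=\delta^1_B\delta^1_A$ follows from $\terminator{U}{B}\glue\terminator{U-B}{A}=\terminator{U}{A\cup B}=\terminator{U}{A}\glue\terminator{U-A}{B}$ and associativity of the action.
\item For the mixed identity we need $\varphi(P\glue\terminator{U}{B})-A=\varphi((P-A)\glue\terminator{U-A}{B})$. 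This holds because $(P\glue\terminator{U}{B})-A$ and $(P-A)\glue\terminator{U-A}{B}$ are the same pomset: $A$ remains in the target interface after terminating $B$, and deleting it commutes with the gluing.
\end{itemize}

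\textbf{Key invariant.} For every path $\alpha$ from $\varphi(\id_\emptyset)$ to a cell $m$, we have $m=\varphi(\ev(\alpha))$. Proceed by induction on the length of $\alpha$.
\begin{itemize}
\item For a down-step $m\downstep{B}m\cdot\terminator{U}{B}$, the claim follows because $\varphi$ is a module homomorphism: $\varphi(Q)\cdot\terminator{U}{B}=\varphi(Q\glue\terminator{U}{B})$.
\item For an up-step $\delta^0_A(m')\upstep{A}m'$, write $m'=\varphi(P)$. The induction hypothesis only gives $\varphi(P-A)=\varphi(Q)$ with $Q=\ev(\alpha')$, not $\varphi(P)=\varphi(Q\glue\starter{U}{A})$. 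This is the main obstacle.
\end{itemize}
So the invariant has to be weakened to: $m=\varphi(R)$ for some $R$ with $\ev(\alpha)\subsumes R$, or the reverse direction.

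The argument then splits according to the two inclusions.
\begin{itemize}
\item For $\Lang(X)\subseteq L$, show that every accepting path yields a pomset subsumed by some $R\in L$ with $\varphi(R)\in J$. Downward-closure of $L$ then gives $\ev(\alpha)\in L$. The up-step case of this induction is the delicate part. The pomset $Q\glue\starter{U}{A}$ orders every non-interface event of $Q$ before the events of $A$, so it is less concurrent than $P$ once $P-A$ and $Q$ are suitably related. I would strengthen the induction to carry a pomset $R_\alpha$ with $\varphi(R_\alpha)=m$, $\ev(\alpha)\subsumes R_\alpha$, and $R_\alpha-A$ compatible with the faces. I would try to set this up following the Myhill--Nerode HDA argument of~\cite{FahrenbergZ24}, which handles the same issue for suffix languages.
\item For $L\subseteq\Lang(X)$, take $P\in L$ and a sparse step decomposition $P=P_1\glue\cdots\glue P_n$ into starters and terminators. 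Define cells $m_i=\varphi(P_1\glue\cdots\glue P_i)$. Each starter step is valid by definition of $\delta^0$: $\varphi(P_1\glue\cdots\glue P_i)$ with the newly started $A$ removed equals $\varphi(P_1\glue\cdots\glue P_{i-1})$, since removing the just-started events $A$ from the target interface recovers the previous prefix. Each terminator step is valid by the homomorphism property. This gives an accepting path from $\varphi(\id_\emptyset)$ to $\varphi(P)\in J$ with $\ev=P$.
\end{itemize}
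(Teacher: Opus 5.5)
\textbf{Gap in $\Lang(X)\subseteq L$.} Your well-definedness check and your proof of $L\subseteq\Lang(X)$ (following an ST-decomposition and using $(Q\glue\starter{U}{A})-A=Q$) match the paper. The inclusion $\Lang(X)\subseteq L$, however, is not proved. You correctly locate the difficulty in the up-step, but you then only sketch a strengthened invariant and defer to the Myhill--Nerode argument. That invariant is false in general: carrying a representative $R_\alpha$ with $\varphi(R_\alpha)=\tgt(\alpha)$ and $\ev(\alpha)\subsumes R_\alpha$ (or the reverse) cannot work. The reason is that the class of the cell reached by an up-step need not contain any pomset with as many events as the path label.

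Here is a concrete counterexample. Take $\Sigma=\{a\}$ and $k=2$. Let $\varphi(P)$ record $S_P$, $T_P$ and, for each $e\in T_P$, a bit saying whether $e$ is concurrent with some event of $P-T_P$.
\begin{itemize}
\item This is a module: the bits of $P\glue Q$ are determined by those of $P$ and by $Q$.
\item It is coherent: removing $A\subseteq T_P$ leaves $P-T_P$ unchanged, so the bits just restrict.
\item Take $J=\emptyset$ or $J=M$, so that $L$ is trivially downward-closed.
\end{itemize}
Let $m$ be the cell of type $\{a\}$ whose bit is $1$. All pomsets in $\iIPoms(\emptyset,\emptyset)$ have the same image, so $\delta^0_{\{a\}}(m)=\varphi(\id_\emptyset)$. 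Hence $(\varphi(\id_\emptyset),\upstep{\{a\}},m)$ is a path labelled by the one-event pomset $\starter{\{a\}}{\{a\}}$. Yet every $R\in\varphi^{-1}(m)$ has at least two events. Since subsumption is a bijection, no such $R$ is comparable with the label.

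\textbf{The missing idea: track suffix sets, not representatives.} Write $m\backslash L=\{Q\mid m\cdot Q\in J\}$. Prove by induction on paths $\alpha$ from $\varphi(\id_\emptyset)$ that $\tgt(\alpha)\backslash L\subseteq\ev(\alpha)\backslash L$.
\begin{itemize}
\item The base case and the down-step follow directly from the module axioms.
\item For an up-step $\alpha=(\beta,\upstep{A},m)$, pick any $P$ with $\varphi(P)=m$ and set $U=T_P$. If $m\cdot R\in J$, then $P\glue R\in L$. Since $(P-A)\glue\starter{U}{A}\subsumes P$ and subsumption is compatible with gluing, downward-closure gives $(P-A)\glue\starter{U}{A}\glue R\in L$. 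That is, $\starter{U}{A}\glue R\in\varphi(P-A)\backslash L=\tgt(\beta)\backslash L$. The induction hypothesis then yields $R\in\ev(\alpha)\backslash L$.
\item For an accepting path, $\id_U\in\tgt(\alpha)\backslash L$, which gives $\ev(\alpha)\in L$.
\end{itemize}
This argument uses only the equality $\varphi(P-A)=\tgt(\beta)$ and never any relation between $\ev(\beta)$ and $P-A$. The subsumption you had in mind is the right one. It just has to compare $(P-A)\glue\starter{U}{A}$ with a representative $P$ of $m$, not relate $\ev(\beta)\glue\starter{U}{A}$ to $P$.
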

\begin{proof}
  First, note that since $\phi$ is surjective, $X[U] = M(\emptyset, U) = \phi(\iIPoms(\emptyset, U))$.
  The map $\delta^0_A$ is well-defined: if $\phi(P) = \phi(Q)$, then $\phi(P-A) = \phi(Q-A)$ because the presentation is coherent.
  The precubical identities are straightforward to check. For example, if $m = \phi(P) \in M(\emptyset, U)$, and $A,B \subseteq U$ are disjoint, $\delta^0_A(\delta^1_B(m)) = \phi((P \glue \terminator{U}{B}) - A) = {\phi((P-A) \glue \terminator{(U-A)}{B})} = \delta^1_B(\delta^0_A(m))$.
  So $X$ is a well-defined HDA.

  To prove that the language accepted by $X$ is $L$, we first prove the following claim: for any path $\alpha$ in $X$, $\tgt(\alpha) \backslash L \subseteq \ev(\alpha) \backslash L$.
  Note that $\tgt(\alpha) \in M$, and $m\backslash L$ is defined by $m\backslash L = \{P \in \iIPoms_{\leq k} \mid m \cdot P \in J\}$.
  We prove this by induction over $\alpha$.
	\begin{itemize}
		\item If $\alpha = (\phi(\id_\emptyset))$, then $\tgt(\alpha) \backslash L = L = \ev(\alpha) \backslash L$.
		\item If $\alpha = (\beta, \downstep{B}, m)$, then $m = \delta^1_B(\tgt(\beta)) = \tgt(\beta) \cdot \terminator{U}{B}$, where $\tgt(\beta) \in M(\emptyset, U)$.
        Let $R \in \tgt(\alpha)\backslash L$, that is, $m \cdot R \in J$.
        Since $m \cdot R = (\tgt(\beta) \cdot \terminator{U}{B}) \cdot R = \tgt(\beta) \cdot (\terminator{U}{B} \glue R)$, this yields $\terminator{U}{B} \glue R \in \tgt(\beta) \backslash L$.
		By induction hypothesis, we obtain $\terminator{U}{B} * R \in \ev(\beta) \backslash L$, which is in turn equivalent to
		$R \in (\ev(\beta) * \terminator{U}{B}) \backslash L = \ev(\alpha) \backslash L$.
		\item If $\alpha = (\beta, \upstep{A}, m)$, let $P$ be such that $\phi(P) = m$, and $U = T_P$.
		Let $R \in \tgt(\alpha)\backslash L$, which means that $m \cdot R = \phi(P \glue R) \in J$, that is, $P \glue R \in L$.
 		One can check (see \cite{FahrenbergZ24} for a proof) that $(P - A) \glue \starter{U}{A} \subsumes P$, and since $L$ is downward-closed, $(P - A) \glue \starter{U}{A} \glue R \in L$.
		Since $\tgt(\beta) = \delta^0_A(m) = \phi(P-A)$, this means that $\starter{U}{A} \glue R \in \tgt(\beta) \backslash L$.
		By induction hypothesis, we obtain
		$\starter{U}{A} \glue R \in \ev(\beta) \backslash L$. Then, $R \in (\ev(\beta) \glue \starter{U}{A}) \backslash L = \ev(\alpha) \backslash L$.
	\end{itemize}
  With this property, if $\alpha$ is an accepting path, $\tgt(\alpha) \in \top_X = J$, so $\id_U \in \tgt(\alpha)\backslash L$, which implies $\id_U \in \ev(\alpha) \backslash L$, i.e.\ $\ev(\alpha) \in L$.
  We have shown $\Lang(X) \subseteq L$.

  For the converse, we show that for any pomset $P \in {}_\emptyset\iIPoms_{\leq k}$, there is a path $\alpha$ in~$X$ starting in $\bot_X$ and ending in $\phi(P)$, with $\ev(\alpha) = P$.
  This implies that $L \subseteq \Lang(X)$, since for any $P \in L$, $\phi(P) \in J$ is an accepting cell.
  We proceed by induction over the length of an ST-decomposition of $P = P_1 \glue \cdots \glue P_n$ (see \cref{sec:ST-automata} for a reminder on ST-decompositions).
  If $n = 0$, then $P = \id_\emptyset$ and there is a path from $\bot_X$ to $\bot_X$ labeled by $\id_\emptyset$.
  Now, assume we have built a path $\beta$ from $\bot_X$ to $\phi(P_1 \glue \cdots \glue P_{n-1})$ with $\ev(\beta) = P_1 \glue \cdots \glue P_{n-1}$.
  There are two cases. If $P_n = \starter{U}{A}$ is a starter, then we take the path $\alpha = (\beta, \upstep{A}, \phi(P))$.
  This is a correct path because $\delta^0_A(\phi(P)) = \phi(P-A) = \tgt(\beta)$, because $P - A = (P_1 \glue \cdots \glue P_{n-1} \glue \starter{U}{A}) - A = P_1 \glue \cdots \glue P_{n-1}$.
  Similarly, if $P_n = \terminator{U}{A}$ is a terminator, we take $\alpha = (\beta, \downstep{A}, \phi(P))$ which is a correct path because $\delta^1_A(\tgt(\beta)) = \tgt(\beta) \cdot \terminator{U}{A} = \phi(P_1 \glue \cdots \glue P_{n-1} \glue \terminator{U}{A}) = \phi(P)$.
\end{proof}

\begin{example}
  Consider again the presentation of the language $L=\downclosure{\conclist{a\\b}} +\ abc$, depicted in \Cref{fig:pres-example-pres}.
  The corresponding HDA, constructed by \Cref{lem:pres-to-hda}, is depicted in \Cref{fig:pres-example-hda}.
  Notice that the vertical $b$-transition on the right side of the square does not exist in the original presentation.
  It corresponds to the element $\phi(\conclist{a\pinterface\\b\interface})$ of the module.
  The definition of HDA requires that this $1$-dimensional cell must have a lower face map, which we defined as $\delta^0_{\{b\}}(\phi(\conclist{a\pinterface\\b\interface})) = \phi(a)$.
  As a consequence, the resulting HDA is not deterministic, even though
  presentations, by construction, are. This is not
  surprising, since it was established in~\cite{FahrenbergZ24} that there is no deterministic HDA
  for this language $L$.
\end{example}

\begin{figure}[h]
    \centering
    \begin{tikzpicture}[scale=2, >=stealth, font=\footnotesize]
        % labels
        \node at (0.9,0.5)      {$b$};
        \node at (-0.1,0.5)     {$b$};
        \node at (0.5,-0.1)     {$a$};
        \node at (0.5,1.1)      {$a$};
        \node at (0.5,0.5)      {$\conclist{a\\b}$};
        \node at (1.3,0.2)     {$b$};
        \node at (1.3,0.8)     {$c$};
        % \node at (1.5,0.1)      {$b$};
        % \node at (2,0.55)       {$b$};
%            \node at (-0.12,-0.12)  {$\bot$};
%            \node at (1.12, -0.12)  {$\top$};
%            \node at (1.12,1.12)    {$\top$};
%            \node at (1.88,-0.12)   {$\top$};
        % square
        \fill[fill opacity = 0.1] (0,0) -- (0+1,0) --(0+1,0+1) --(0,0+1) -- cycle  ;
        % vertices
        \node[state, initial, initial text=, initial distance=6pt, fill=white, inner sep=0pt, minimum size=8pt] (v00) at (0,0) {};
        \node[state, fill=white, inner sep=0pt, minimum size=8pt] (v01) at (0,1) {};
        \node[state, fill=white, inner sep=0pt, minimum size=8pt] (v10) at (1,0) {};
        \node[state, accepting, fill=white, inner sep=0pt, minimum size=8pt] (v11) at (1,1) {};
        \node[state, accepting, fill=white, inner sep=0pt, minimum size=8pt] (v20) at (1.45,.5) {};
        % \node[state, accepting, fill=white, inner sep=0pt, minimum size=8pt] (v20) at (2,0) {};
        % edges
        \foreach \x in {0,1}
        {
            \draw[->, semithick] (v\x0) -- (v\x1);
        }
        \draw[->, semithick] (v00) -- (v01);
        \draw[->, semithick] (v10) -- (v11);
        \draw[->, semithick] (v00) -- (v10) ;
        % \draw[->, semithick] (v10) -- (v20) ;
        \draw[->, semithick] (v01) -- (v11) ;
        \draw[->, semithick] (v10) -- (v20) ;
        \draw[->, semithick] (v20) -- (v11) ;
        % \draw[->, very thick, bend left, red]  (v11) edge (v10) ;
        % \draw[->, semithick, loop above] (v20) edge (v20) ;
    \end{tikzpicture}
    \caption{The HDA constructed from the presentation of \Cref{ex:presentation} (co-accessible cells only). }
    \label{fig:pres-example-hda}
\end{figure}

%% file: firstorder.tex
\section{Aperiodic categories and First-Order logic}
\label{sec:fo}

For languages of finite words, algebraic characterizations have led to the study of several strict subclasses of regular languages, defined by the properties of their recognizing monoids.
Perhaps the most prominent is the class of languages recognized by aperiodic monoids.
This class can equivalently be characterized using star-free regular expressions, counter-free automata and first-order logic~\cite{Schutzenberger65,McNaughton71}.
Intuitively, these languages can be recognized without needing to count letters modulo a constant $k > 1$.
For instance, $a^*$ is star-free, whereas $(aa)^*$ is not, because in the latter requires counting the number of $a$'s modulo~2.

In this section, we extend this class of languages from words to pomsets.
Recall that a finite monoid $M$ is \emph{aperiodic} if there exists $n \in \mathbb{N}$ such that, for every $x\in M$, $x^{n+1} = x^n$.
Extending this definition to categories and modules is straightforward:
we require the same property for every endomorphism $x \in \cC(U,U)$.
For the particular case of pomsets, this means that we only iterate pomsets with the same source and target interfaces.

\begin{definition}
    A finite category $\cC$ is \emph{aperiodic} if there exists $n \in \mathbb{N}$ such that, for every object $U$ and endomorphism $x \in \cC(U,U)$, $x^{n+1} = x^n$.
\end{definition}

\begin{definition}
    A $\cC$-module $M$ is \emph{counter-free} if there exists $n \in \mathbb{N}$ such that, for every objects $U,V$ and for every $m \in M(V,U)$ and $c \in \cC(U,U)$, $m\cdot x^{n+1} = m \cdot x^n$.
\end{definition}

For modules, we prefer to use the term \emph{counter-free}, traditionally used for automata, because we consider them to be closer to automata than to categories. % since they are equipped only with an action from $\cC$ instead of an internal composition like monoids and categories are.
In this section, we aim to prove that aperiodic categories, counter-free modules, and FO logic recognize the same pomset languages.
Expressivity of counter-free HDA will be explored in \Cref{sec:aperiodicity}.

\subsection{ST-sequences and ST-automata}
\label{sec:ST-automata}

First, let us introduce a technical tool called the Starter-Terminator (ST) decomposition of a pomset.
In previous work, ST-decompositions have been used to lift various results from words to pomsets, such as the Büchi-Elgot-Trakhtenbrot theorem~\cite{AmraneBFF24} and Kamp theorem~\cite{ClementEL26}.
The key idea is that any pomset of bounded dimension can be represented as a finite word over a finite alphabet, called an ST-sequence.
%
% A pomset $(P,<_P, \eventorder_P, \lambda_P, S_P, T_P)$ is said to be \emph{discrete} if $<_P$ is empty (\emph{e.g.}, $\conclist{a\pinterface\\a\interface}$ is discrete).
% \begin{definition}
%     A discrete pomset $(P,<_P, \eventorder_P, \lambda_P, S_P, T_P)$ is a \emph{starter} if $P = T_P$. It is a \emph{terminator} if $P = S_T$.
% \end{definition}
% For instance, $P = \conclist{\interface a \interface \\ \interface b \pinterface}$ is a terminator, because the $b$ that was running in the starting
% interface is terminated in $P$, and only the $a$ continues running. In a terminator, the whole pomset is the starting interface, hence the $P = T_P$ in the definition. An
% event in $P - T_P$ is started by $P$.
% If a discrete pomset is both a starter and a terminator, we call it an \emph{identity}, since it does not start nor terminate any event: $S_P = P = T_P$.
%
Recall from \cref{sec:pomsets} that starters and terminators are special elementary pomsets consisting of concurrent events, some of which may be started or terminated.
We denote by $\Omega$ the set of all starters and terminators (including identities), and $\Omega_{\leq k}$ those of dimension at most~$k$. Note that $\Omega_{\leq k}$ is finite.
\begin{definition}
    A word $w=P_1 P_2 \cdots P_n \in \Omega^*$ is an \emph{ST-sequence} if $T_{P_i} = S_{P_{i+1}}$ for any $i = 1, 2, \dots, n-1$.
    In such a case, we define $\glueST(w) = P_1 \glue P_2 \glue \cdots \glue P_n \in \iIPoms$.
\end{definition}
Although any ST-sequence defines a unique pomset, a given pomset can be decomposed into several ST-sequences.
However, there exists a canonical ST-decomposition of a pomset.
\begin{proposition}[\cite{FahrenbergZ24}]
    For any pomset $P$, there exists a unique ST-sequence $w$ such that $\glueST(w) = P$ and $w$ alternates between non-identity starters and terminators.
\end{proposition}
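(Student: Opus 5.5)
Existence and uniqueness are proved separately, both by induction on the number of events $\card{P}$ (more precisely, on the number of events of $P$ that are not in $S_P\cap T_P$). For existence, start from any ST-decomposition of $P$. One exists by peeling: if $P$ is not discrete, the set $M$ of $\precedence$-minimal events is nonempty, and so is the set $m \subseteq M$ of minimal events not in $S_P$. Write $P = \starter{M}{m}\glue P'$ when $S_P \neq M$, or otherwise $P = \terminator{S_P}{B}\glue P''$, where $B$ is the set of minimal events preceding some other event. Interval-orderness guarantees that these events can be cut cleanly, so the remaining pomsets are well defined with smaller measure. Next, normalise the resulting ST-sequence with three rewrite rules. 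Identities $\id_U$ are deleted. Two consecutive starters merge, using $\starter{U}{A}\glue\starter{V}{B} = \starter{V}{A\cup B}$ (with $U = V - B$ and $A\subseteq U$); the same holds for terminators. Each rule strictly shortens the word and preserves $\glueST$, so the process terminates in an alternating sequence. The empty word case $P=\id_U$ is the identity itself.

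For uniqueness, I would show that the alternating decomposition is determined intrinsically by $P$. Take an alternating sequence $w = P_1\cdots P_n$ with $\glueST(w)=P$. Each event $x$ is started in exactly one starter (or belongs to $S_P$) and terminated in exactly one terminator (or belongs to $T_P$). This gives positions $s(x) < t(x)$ in $\{0,\dots,n+1\}$. The key observation is that $x \precedence_P y$ if and only if $t(x) < s(y)$; this is the definition of $\precedence$ in iterated gluing. The claim is then that alternation forces the positions to be canonical. The first letter is a starter exactly when some minimal event is not in $S_P$, and in that case it must start precisely all $\precedence$-minimal events outside $S_P$. It cannot start fewer: a minimal event started later would be preceded by an event terminated in the intermediate terminator, which contradicts minimality once alternation rules out empty steps. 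It cannot start more, since a non-minimal event $y$ needs some $x$ with $t(x)<s(y)$. Symmetrically, a first terminator must terminate exactly the events $x$ that precede some event whose start lies in the next starter, which again is determined by $P$. Removing the first letter and applying the induction hypothesis to the remaining pomset (which is itself determined by $P$ and $P_1$) gives uniqueness.

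The main obstacle is the uniqueness argument at the first step, namely proving that the first starter or terminator is forced. Here the interval-order property and the absence of identity letters both have to be used carefully. I would argue via maximal antichains (the interval order's ``time slices''). An alternating ST-sequence corresponds exactly to the sequence of maximal antichains of $P$ (enriched by interfaces), ordered by the interval order. Starters go from one slice up to the next maximal antichain and terminators go down, and the chain of maximal antichains of an interval order is linearly ordered and unique. If that reformulation goes through cleanly, both existence and uniqueness follow from it directly.
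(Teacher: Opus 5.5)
The paper gives no proof of this statement; it imports the result from \cite{FahrenbergZ24}, so your argument has to stand on its own. It does not: the existence half contains a genuine error, at exactly the point where the interval-order property is needed. In the case $S_P = M$ you peel off $\terminator{S_P}{B}$, with $B$ the set of \emph{all} minimal events that precede something. The gluing $\terminator{S_P}{B}\glue P''$ contains the precedences $B\times(P\setminus M)$. So it asserts that every event of $B$ precedes \emph{every} non-minimal event, and interval-orderness does not give you that. The pomset of \Cref{fig:st-example} is a counterexample. Once its two minimal events (the first $a$ and $b$) are started, both precede something ($a\precedence c$, and $b$ precedes the second $a$). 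Yet $b$ is concurrent with $c$, so terminating them together would add $b\precedence c$; the actual sparse decomposition terminates the first $a$ alone.

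The correct choice is $B=\{x\in M\mid x\precedence y \text{ for all } y\in P\setminus M\}$. It is nonempty when $P\neq M$ because the strict up-sets of an interval order are linearly ordered by inclusion (this is $2+2$-freeness). Hence the largest up-set of a minimal event is all of $P\setminus M$, and with this $B$ the cut is valid. Your induction measure also fails to decrease at a starter step: for $a\precedence b$ with empty interfaces, both events remain outside $S\cap T$. The measure $2\card{P}-\card{S_P}-\card{T_P}$ works.

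In the uniqueness part, the starter case is fine. For a leading terminator, however, ``the events that precede some event started in the next starter, which again is determined by $P$'' is circular until you show the next starter is intrinsic. You can state it directly. If $P_1$ is a terminator then $S_P=M$. If moreover $P\neq S_P$, then $P_1$ terminates exactly $\{x\in S_P\mid x\precedence y \text{ for all } y\notin S_P\}$. Indeed, an event with $t(x)=1$ precedes everything started later, whereas one with $t(x)\geq 3$ is concurrent with the nonempty set started by the non-identity starter $P_2$. If $P=S_P$, it terminates $S_P\setminus T_P$. With this the induction closes, and uniqueness needs no interval-order hypothesis at all.

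The maximal-antichain picture you sketch at the end would also work. The interfaces reached right after each starter, together with $S_P$ when the sequence begins with a terminator, are exactly the maximal antichains of $P$; this follows from the letters being non-identities and the Helly property of intervals. The remaining interfaces are intersections of consecutive ones. In your proposal, though, this is only stated as a hope, not proved.
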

\begin{figure}[h]
    \centering
	\scalebox{0.8}{
    \begin{tikzpicture}[x=1cm, y=1cm]
        \begin{scope}[xshift=3cm]
        	\draw[dotted]  (1,1.6)--(1,-0.8);
        	\draw[dotted]  (2,1.6)--(2,-0.8);
        	\draw[dotted]  (3,1.6)--(3,-0.8);
        	\draw[dotted]  (4,1.6)--(4,-0.8);
        	\draw[dotted]  (5,1.6)--(5,-0.8);
            \draw[pomset-1] (0.2,0.8) rectangle (1.8,1.2);
            \node at (1, 1) {$a$};
            \draw[fill=pomset-2-light] (0.2,0) rectangle (3.8,0.4);
            \node at (2, 0.2) {$b$};
            \draw[pomset-4] (2.2,0.8) rectangle (5.8,1.2);
            \node at (4, 1) {$c$};
            \fill[pomset-3-light] (6,0) rectangle (4.2,0.4);
            \fill[pomset-3-light] (6.04,0) rectangle (6.08,0.4);
            \fill[pomset-3-light] (6.12,0) rectangle (6.16,0.4);
            \fill[pomset-3-light] (6.20,0) rectangle (6.24,0.4);
            \draw[-] (6,0) -- (4.2,0) -- (4.2,0.4) -- (6,0.4);
            \node at (5, 0.2) {$a$};
            % ST-sequence
            \node at (0.5,-.5) {$\conclist{\colora\interface\\\colorb\interface}$}; 	
            \node at (1.5,-.5) {$\conclist{\interface \colora\pinterface\\\interface \colorb\interface}$};
            \node at (2.5,-.5) {$\conclist{\pinterface \colorc\interface\\\interface \colorb\interface}$};
            \node at (3.5,-.5) {$\conclist{\interface \colorc\interface\\\interface \colorb\pinterface}$};
            \node at (4.5,-.5) {$\conclist{\interface \colorc\interface\\\pinterface \colorabis\interface}$};
            \node at (5.5,-.5) {$\conclist{\interface \colorc\pinterface\\\interface \colorabis\interface}$};
        \end{scope}
        \begin{scope}[xshift=-3cm]
            \node (a1) at (1, 1.4) {$a$};
            \node (b) at (2, 0) {$b$};
            \node (c) at (4, 1.4) {$c$};
            \node (a2) at (5, 0) {$a\ibullet$};
            \draw[->] (a1)--(c);
            \draw[->] (a1)--(a2);
            \draw[->] (b)--(a2);
            \draw[dotted,->] (a1) -- (b);
            \draw[dotted,->] (c) -- (b);
            \draw[dotted,->] (c) -- (a2);
        \end{scope}
    \end{tikzpicture}
	}
    \caption{A pomset with its associated sparse ST-decomposition.}
    \label{fig:st-example}
\end{figure}
Such a decomposition is called \emph{sparse}.
As emphasized, non sparse ST-decompositions are not unique. For instance, the pomset depicted in \Cref{fig:st-example} has non-sparse ST-decomposition
$
    \conclist{\colora\interface\\~}
    \conclist{\interface\colora\interface\\\pinterface\colorb\interface}
    \conclist{\interface \colora\pinterface\\\interface \colorb\interface}
    \conclist{\pinterface \colorc\interface\\\interface \colorb\interface}
    \conclist{\interface \colorc\interface\\\interface \colorb\pinterface}
    \conclist{\interface \colorc\interface\\\pinterface \colorabis\interface}
    \conclist{\interface \colorc\pinterface\\\interface \colorabis\interface}
$, with two consecutive starters.
Given a language $K$ of ST-sequences, we denote by $\glueST(K) = \{\glueST(w) \mid w \in K\}$ the language of associated pomsets.
Conversely, given a language $L$ of pomsets of dimension $k$,
$\STseq(L) = \{w \in \Omega_{\leq k}^* \mid \glueST(w) \in L\}$ is the language of ST-decompositions representing pomsets of $L$.
Since ST-sequences are words over a finite alphabet, they can be accepted by finite state automata.
\begin{definition}[\cite{AmraneBCFZ24}]
    A \emph{$k$-dimensional ST-automaton} is a tuple $\mathcal A = (Q, I, F, \delta, \lambda)$ where
    $(Q, I, F, \delta)$ is a finite state automaton over the alphabet $\Omega_{\leq k}$ and $\lambda: Q \to \CList_{\leq k}$
    is a labeling function such that for any transition $q \overset{P}{\to} q'$, $\lambda(q) = S_P$ and $\lambda(q') = T_P$.
\end{definition}

\subsection{A McNaughton-Papert theorem for pomset languages}

\begin{theorem}
    \label{thm:aper-cat-fo}
    Let $L \subseteq \iIPoms_{\leq k}$ be a pomset language of bounded dimension $k$. The following propositions are equivalent:
    \begin{enumerate}
        \item\label{thm:aper-cat-fo:cat} $L$ is recognized by an aperiodic category,
        \item\label{thm:aper-cat-fo:mod} $L$ is recognized by a counter-free $\iIPoms$-module,
        \item\label{thm:aper-cat-fo:fo} $L$ is defined by an FO formula.
    \end{enumerate}
\end{theorem}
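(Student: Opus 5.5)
We plan to prove the cycle $(\ref{thm:aper-cat-fo:cat}) \Rightarrow (\ref{thm:aper-cat-fo:mod}) \Rightarrow (\ref{thm:aper-cat-fo:fo}) \Rightarrow (\ref{thm:aper-cat-fo:cat})$.

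\textbf{From aperiodic categories to counter-free modules.} This step is immediate by the construction of \Cref{lem:cat-to-mod}. If $F : \iIPoms_{\leq k} \to \cD$ recognizes $L$ with $\cD$ aperiodic of index $n$, take the module whose elements are morphisms of $\cD$ in the image of $F$. For $m \in M(V,U)$ and $x \in \iIPoms(U,U)$, $F(x)$ is an endomorphism of $\cD$, so
\[
m\cdot x^{n+1} = m \comp F(x)^{n+1} = m\comp F(x)^n = m\cdot x^n .
\]

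\textbf{From counter-free modules to FO.} We pass through ST-sequences. Given a finite counter-free presentation $(M,J,\varphi)$, build a deterministic automaton $\mathcal A$ over the finite alphabet $\Omega_{\leq k}$. Its states are $M$ plus a sink. Its initial states are the $\varphi(\id_U)$; for several source interfaces, take a disjoint union of deterministic automata, one per $U$, since $\Omega_{\leq k}$ determines $S$. Its transitions are $m \xrightarrow{P} m\cdot P$ when $\tgt(m)=S_P$, and the sink otherwise. Its accepting set is $J$. Then $\mathcal A$ accepts exactly $\STseq(L)$.

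Next we check that $\mathcal A$ is counter-free. For a word $w$ with $m\cdot w^{n}$ defined and nonsink, $w$ is an ST-sequence from $U$ to $U$, so $\glueST(w) \in \iIPoms(U,U)$ and $m\cdot w^{n+1} = m\cdot w^n$. If $w$ is not composable, both powers go to the sink. Hence the transition monoid of $\mathcal A$ is aperiodic, so by McNaughton--Papert $\STseq(L)$ is FO-definable over words.

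We then invoke the transfer result of~\cite{ClementEL26}: a pomset language $L$ is FO-definable iff $\STseq(L)$ is FO-definable over words. That paper uses it to lift Kamp's theorem, and we would cite or adapt it. The relevant direction, from words to pomsets, interprets positions of an ST-sequence inside the pomset. \textbf{This interpretation step is the main obstacle.} A pomset has many ST-decompositions, so one must check that the word formula is invariant under their choice. Here invariance holds because $\STseq(L)$ is saturated: it contains all decompositions of each pomset. The sparse decomposition can then be FO-interpreted, with "time points" given by events' start and end positions in the interval order, which are FO-definable up to equality of up/down-sets.

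\textbf{From FO to aperiodic categories.} Conversely, if $L$ is FO-definable, then $\STseq(L)$ is FO-definable over words. This holds because precedence, event order, labels and interfaces of $\glueST(w)$ are FO-definable from $w$, since events are tracked through the bounded-width conclists (the other direction of~\cite{ClementEL26}). So $\STseq(L)$ has an aperiodic syntactic monoid $N$ with index $n$.

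We show that $\Syn(L)$ is aperiodic with the same $n$. Take $x\in\iIPoms(U,U)$ and any contexts $\gamma,\delta$, and choose ST-sequences $g,w,d$ representing $\gamma$, $x$, $\delta$. Then $\gamma x^{n}\delta\in L$ iff $g w^{n} d\in \STseq(L)$ iff $g w^{n+1} d\in\STseq(L)$ iff $\gamma x^{n+1}\delta \in L$. So $x^n\sim x^{n+1}$.

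Finiteness of $\Syn(L)$ follows from regularity of $\STseq(L)$: the map sending $P$ to the class of any of its ST-decompositions is well defined modulo $\sim$ and refines it. Therefore $L$ is recognized by its syntactic category, which is finite and aperiodic, giving $(\ref{thm:aper-cat-fo:cat})$.
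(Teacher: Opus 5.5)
Your proof is correct. Steps (1)$\Rightarrow$(2) and (2)$\Rightarrow$(3) are essentially the paper's: the module induced by the functor, then an ST-automaton on the module elements, McNaughton--Papert, and the transfer result of~\cite{ClementEL26}. Your sink state and splitting by source interface only turn this automaton into a complete DFA.

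One small fix: your counter-freeness check needs $n\geq 2$. For $n=1$, $w$ may be an ST-sequence $U\to V$ with $V\neq U$, so $m\cdot w$ is non-sink while $m\cdot w^2$ is the sink. This is harmless, since $n$ can always be increased.

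The genuine difference is in (3)$\Rightarrow$(1). The paper takes an aperiodic monoid $M$ recognizing $\STseq(L)$. It quotients $M$ by the congruence generated by $\varphi(v)\sim\varphi(w)$ whenever $\glueST(v)=\glueST(w)$, and maps $\iIPoms_{\leq k}$ into the quotient monoid.

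You instead show directly that the syntactic category $\Syn(L)$ is finite and aperiodic. You transport contexts $\gamma,\delta$ through ST-decompositions, using that $\STseq(L)$ is saturated: it contains either all or none of the decompositions of a given pomset. Saturation is exactly what makes $P\mapsto[w_P]$ well defined into the syntactic monoid $N$ of $\STseq(L)$, with kernel contained in $\sim$ on each hom-set. Hence $|\Syn(L)(U,V)|\leq|N|$, and together with the finitely many objects of $\iIPoms_{\leq k}$ this gives finiteness (you should state the finitely-many-objects point explicitly).

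What each route buys:
\begin{itemize}
\item Your route yields a canonical recognizer and reuses the syntactic category already introduced for regularity.
\item It also sidesteps bookkeeping that the paper's monoid quotient requires. The congruence check must handle words $vw$ that are not ST-sequences. The map into a one-object monoid must send every $\id_U$ to the unit.
\item The paper's route stays closer to the word level, building the recognizer directly as a quotient of a monoid for $\STseq(L)$.
\end{itemize}
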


\begin{proof}
    (\ref{thm:aper-cat-fo:cat}) $\Rightarrow$ (\ref{thm:aper-cat-fo:mod}).
    Let $\cC$ be an aperiodic category recognizing $L$ with functor $F$. We prove that $\cC$, considered as an $\iIPoms_{\leq k}$-module as in the proof
    of \Cref{lem:cat-to-mod}, is counter-free.
    Fix $P, Q \in \iIPoms$. Then, $F(P) \cdot Q^{n+1} = F(P) \comp F(Q)^{n+1} = F(P) \comp F(Q)^n = F(P) \cdot Q^n$, where $\comp$ is the composition in
    $\cC$.
    So the associated module is counter-free.

    (\ref{thm:aper-cat-fo:mod}) $\Rightarrow$ (\ref{thm:aper-cat-fo:fo}).
    Let $(M,J,\varphi)$ be a counter-free presentation for $L$. Then, consider the ST-automaton $\mathcal{A} = (Q, I, F, \delta, \lambda)$
    with:
    \begin{mathpar}
	Q = M, \and 
	%I = \{\varphi(\varepsilon)\} \and 
	I = \{\varphi(\id_U) \mid U \in \CList_{\leq k}\}, \and
	F = J, \\
	\delta(m, P) = m \cdot \varphi(P), \and
	\lambda(m) = \tgt(m).
    \end{mathpar}
%    \begin{itemize}
%        \item $Q = M$% \{m \mid \exists U . m \in M(\emptyset, U)\}$
%        \item $I = \{\varphi(\varepsilon)\}$
%        \item $F = J \cap Q$
%        \item $\delta(m, P) = m \cdot \varphi(P)$
%    \end{itemize}
    First, observe that $\mathcal A$ is counter-free (see \cref{sec:aperiodicity} for a reminder of the notion of counter-freeness for word automata). Indeed, for any $m \in Q$ and $w \in \Omega_{\leq k}^*$, taking $i$ such that $m \cdot P^{i+1} = m \cdot P^i$, we have
    $\delta(m, w^{i+1}) = m \cdot \varphi(\glueST(w))^{i+1} = m \cdot \varphi(\glueST(w))^i = \delta(m, w^i)$.
    Further, $\Lang(\mathcal A) = \STseq(L)$, which implies that the word language $\STseq(L)$ is FO-definable~\cite{McNaughton71}.
    Hence, using a result from~\cite{ClementEL26}, $\glueST(\STseq(L)) = L$ can be recognized by an FO formula over pomsets.

    (\ref{thm:aper-cat-fo:fo}) $\Rightarrow$ (\ref{thm:aper-cat-fo:cat}).
    It is known from~\cite{ClementEL26} that any pomset language $L$ is defined by an FO formula if and only if
    $\STseq(L)$ is defined by an FO formula over ST-sequences. Since ST-sequences are words over finite alphabets,
    this implies that there exists an aperiodic monoid recognizing $\STseq(L)$~\cite{McNaughton71}.
    Fix $M$ to be such a monoid, with $J \subseteq M$ and $\varphi: \Omega^*_{\leq k} \to M$ a monoid morphism such that $\STseq(L) = \varphi^{-1}(J)$.
    Assume that $\varphi$ is surjective (if this is not the case, then we can replace $M$ by $\varphi(\Omega^*_{\leq k})$).
    We cannot recognize $L$ using $M$ and $\varphi$ because one pomset can be recognized by several ST-sequences.
    Let us define $\sim$ the equivalence relation over $M$ induced by $\varphi(v) \sim \varphi(w)$ whenever $\glueST(v) = \glueST(w)$.
    The relation $\sim$ is a congruence on $M$. Indeed, if $a,m,m'\in M$ and $m\sim m'$, 
	then there exist $v,w,w'\in \Omega^*_{\leq k}$ such that $\varphi(v)=a$, $\varphi(w)=m$, $\varphi(w')=m'$ and $\glueST(w)=\glueST(w')$. Then 
	\[
		\glueST(vw)=\glueST(v)*\glueST(w)=\glueST(v)*\glueST(w')=\glueST(vw'),
	\] 
	which implies $am=\varphi(v)\varphi(w)=\varphi(vw)\sim\varphi(vw')=\varphi(v)\varphi(w')=am'$. Similarly, $ma\sim m'a$. Thus, the quotient $M/\!\sim$ is a monoid. 
    Now, define $\psi:\iIPoms_{\leq k}\to M/\!\sim$ by the formula $\psi(P)=[\varphi(\glueST^{-1}(P))]$. This is valid since for $w,w'\in \Omega_{\leq k}^*$ such that $\glueST(w)=\glueST(w')=P$ we have $\varphi(w)\sim\varphi(w')$ and hence $[\varphi(w)]=[\varphi(w')]$. Since
    \[
    \psi(\glueST(v)*\glueST(w))=\psi(\glueST(vw))=[\varphi(vw)]=[\varphi(v)][\varphi(w)]=\psi(\glueST(v))\psi(\glueST(w)),
    \]
    $\psi$ is a homomorphism. Further, we choose $J' = \{[m] \mid m \in J\}$.
%    This is also well-defined: if $m\sim m'$, then $m=\varphi(w)$ and $m'=\varphi(w')$ for some $w,w'\in\Omega_{\leq k}^*$ such that $\glueST(w)=\glueST(w')$.
%    Then, $w\in\STseq(L)$ iff $w'\in\STseq(L)$, that is, $\varphi(w) \in J$ iff $\varphi(w') \in J$
%    and thus $[m] \in J'$ does not depend on the choice of the representative.
	%
    Notice that if $m\sim m'$, then $m \in J$ iff $m' \in J$. To see this, assume $m=\varphi(w)$ and $m'=\varphi(w')$ for some $w,w'\in\Omega_{\leq k}^*$ such that $\glueST(w)=\glueST(w')$.
    Then, $w\in\STseq(L)$ iff $w'\in\STseq(L)$, that is, $\varphi(w) \in J$ iff $\varphi(w') \in J$.
    From this, we deduce that $L = \psi^{-1}(J')$.
%    Indeed, let $P \in L$, with an ST-decomposition $w \in \STseq(L)$, i.e., such that $\phi(w) \in J$. Then $\psi(P) = [\phi(w)] \in J'$.
%    Conversely, assume $\psi(P) \in J'$, that is, $\phi(w) \sim m$ for some $m \in J$ and $w \in \Omega_{\leq k}$ with $\glueST(w) = P$.
	Indeed, $\psi(P) \in J'$ is equivalent to $\phi(w) \sim m$ for some $m \in J$ and $w \in \Omega_{\leq k}$ with $\glueST(w) = P$.
    Then $\phi(w) \in J$, i.e., $w \in \STseq(L)$, i.e., $P \in L$.
    Finally, $M'$ is aperiodic since $[m]^{n+1} = [m^{n+1}] = [m^n] = [m]^n$. This concludes the proof.
\end{proof}

%% file: aperiodic.tex
\section{Aperiodic categories and counter-free HDA}
\label{sec:aperiodicity}

In the context of words, a deterministic automaton $\mathcal{A} = (\Sigma, Q, I, F, \delta)$ is \emph{counter-free} if
there exists $n \in \mathbb{N}$ such that, for any word $w \in \Sigma^*$ and state $q \in Q$,
$\delta(q, w^{n+1}) = \delta(q, w^n)$ \cite{McNaughton71}.
Essentially, this says that the transition monoid of the automaton is aperiodic, so the proof that counter-free automata recognize aperiodic languages is fairly straightforward.
It is, however, limited to deterministic automata, and it has been established that not all HDA can be determinized \cite[Section 6]{FahrenbergZ24}.
In \cite{DiekertG08}, Diekert and Gastin discussed two possible definitions of counter-freeness in the non-deterministic setting of Büchi automata. We adapt the less restrictive of them, Definition~11.5, for HDA.

\subsection{Counter-free HDA recognize aperiodic languages}

Given an HDA $X$, a cell $x \in X[U]$ and a pomset $P \in \iIPoms(U,V)$, we denote by $d(x, P)$ the set of cells accessible from $x$ by paths labeled by $P$.
\begin{definition}
\label{def:counter-free-hda}
    An HDA $X$ is \emph{counter-free} if there exists $n \in \mathbb{N}$ such that, for any cell $x \in X[U]$ and pomset $P \in \iIPoms(U, U)$,
    $d(x, P^{n+1}) = d(x, P^n)$.
\end{definition}

\begin{proposition}
\label{prop:counter-free-implies-aperiodic}
    For any counter-free HDA accepting a language $L$, there exists an aperiodic category recognizing $L$.
\end{proposition}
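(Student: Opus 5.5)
Let $X$ be a counter-free HDA with constant $n$ and $\Lang(X)=L$. The plan is to build the analogue of the transition monoid. The natural choice is the transition category of $X$, using the function $t$ from the proof of \Cref{lem:reg-to-rec}. For $P \in \iIPoms_{\leq k}(U,V)$, set
\[
t(P)=\{(x,y)\in X[U]\times X[V] \mid y \in d(x,P)\},
\]
that is, the binary relation $X[U]\to X[V]$ of pairs linked by a path labelled~$P$. Let $\cD$ be the category with objects the conclists of length at most~$k$ and morphisms $\cD(U,V)$ the relations $X[U]\times X[V]$ that arise as $t(P)$ for some $P\in\iIPoms_{\leq k}(U,V)$. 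Composition is relational composition (in diagrammatic order), and the identity on $U$ is $t(\id_U)$. $\cD$ is finite since $X$ is finite, and accepted pomsets are then $F^{-1}(K)$ with $K=\{r\in\Mor\cD \mid r\cap(\bot\times\top)\neq\emptyset\}$.

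The key step is to show that $F=t$ is a functor, i.e.\ $t(P\glue Q)=t(P)\,;\,t(Q)$ and $t(\id_U)$ is an identity. One inclusion is easy: concatenating a path labelled $P$ with a path labelled $Q$ gives a path labelled $P\glue Q$. The converse inclusion is the main obstacle. Given a path $\alpha$ with $\ev(\alpha)=P\glue Q$, we must split it as $\alpha_1\ast\alpha_2$ with $\ev(\alpha_1)=P$ and $\ev(\alpha_2)=Q$. Along a path, events are started and terminated in an order that is a linearization compatible with the ST-decomposition, and the events of $P-T_P$ must all terminate before any event of $Q-S_Q$ starts.
\begin{itemize}
\item If the path interleaves, say it starts an event of $Q$ in the same step, or before terminating an event of $P$ that is concurrent to it, the labelled pomset would make them concurrent, contradicting $P\glue Q$. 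This holds except for simultaneous steps.
\item Simultaneous steps must be split using the precubical structure: a step $\upstep{A\cup B}$ factors through the face $\delta^0_A$, giving $\upstep{B}$ followed by $\upstep{A}$, and similarly for terminators.
\end{itemize}
With these splittings there is a cut point whose cell has type $T_P=S_Q$. I would justify this via the ST-sequence of $\alpha$ and the fact (from~\cite{FahrenbergZ24}) that paths labelled by a pomset correspond to ST-decompositions of it, which can be refined to pass through the interface $T_P$.

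The identity law is the same splitting result applied to $\id_U\glue P=P$. Alternatively, if $t(\id_U)$ fails to be the diagonal because $X[U]$ contains cells with non-trivial identity paths, we can restrict each $\cD(U,V)$ to relations closed under pre- and post-composition with $t(\id_U)$ and use $t(\id_U)$ as the identity. Since $t(\id_U)=t(\id_U\glue\id_U)$ is idempotent, this is still a category and $t$ still lands in it.

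Finally, aperiodicity follows. Any endomorphism of $\cD$ is $r=t(P)$ for some $P\in\iIPoms(U,U)$, and functoriality gives $r^m=t(P^m)$. Counter-freeness says $d(x,P^{n+1})=d(x,P^n)$ for every $x$, i.e.\ $t(P^{n+1})=t(P^n)$, so $r^{n+1}=r^n$ with a uniform $n$. Hence $\cD$ is aperiodic and recognizes $L$ via $F$ and $K$.
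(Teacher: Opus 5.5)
Your route differs from the paper's. The paper's proof is a chain of citations: the ST-automaton associated with a counter-free HDA is counter-free, so $\STseq(L)$ is FO-definable by \cite{McNaughton71}; hence $L$ is FO-definable by \cite{ClementEL26}; hence $L$ is recognized by an aperiodic category by \cref{thm:aper-cat-fo}. You instead turn the map $t$ from \cref{lem:reg-to-rec} into a finite transition category and read aperiodicity directly off $d(x,P^{n+1})=d(x,P^n)$. This is the classical transition-monoid argument. It is self-contained: no logic, no transfer of FO-definability between ST-sequences and pomsets, no \cref{thm:aper-cat-fo}. It also yields an explicit aperiodic category with the same constant~$n$. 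Both routes rest on the same geometric fact about paths, which the paper uses without comment. That fact is what makes ``$t(P)=t(P')$ implies $P\sim P'$'' true in \cref{lem:reg-to-rec}, and what makes the associated ST-automaton counter-free and accept exactly $\STseq(L)$. Your proposal has the merit of isolating it.

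Your justification of that fact, however, fails as written: splitting simultaneous steps does not produce a cut point of type $T_P$. Here is a counterexample.
\begin{itemize}
\item Let $s$ be a square of type $a\eventorder b$, and let $\alpha$ start $b$, then start $a$ (entering $s$), then terminate $a$, then terminate $b$.
\item Then $\ev(\alpha)$ is $a\parallel b$ with empty interfaces. This equals $P\glue Q$ for $P=\starter{\{a\}}{\{a\}}$ and $Q$ the pomset $a\parallel b$ with $S_Q=\{a\}$, $T_Q=\emptyset$.
\item There are no simultaneous steps and $P-T_P=\emptyset$, yet $\alpha$ never visits a cell of type $\{a\}=T_P$.
\end{itemize}
Your first bullet only controls events of $P-T_P$ against events of $Q-S_Q$. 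Here, instead, an event of $T_P$ starts after an event of $Q-S_Q$. Refining the ST-sequence of $\alpha$ cannot help, since it is already atomic; you must first coarsen it.

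The correct argument goes as follows. Consecutive up-steps merge, because $\delta^0_B(\delta^0_A(y))=\delta^0_{A\cup B}(y)$, and similarly for down-steps. So $\alpha$ can be replaced by a path with the same endpoints whose ST-sequence is sparse, hence equal to the unique sparse decomposition of $P\glue Q$. Now take ST-decompositions $w_P$ of $P$ and $w_Q$ of $Q$. The word $w_Pw_Q$ merges to that same sparse decomposition. Splitting the merged steps accordingly gives a path from $\src(\alpha)$ to $\tgt(\alpha)$ with ST-sequence $w_Pw_Q$, whose cell at the junction has type $T_P$. In the example, this reroutes the path through the edge $\delta^0_{\{b\}}(s)$.

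With this lemma in place, your proof goes through. Your fallback for identities is unnecessary: a path labelled $\id_U$ cannot start or terminate any event, so $t(\id_U)$ is already the diagonal.
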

\begin{proof}
    If $L$ is recognized by a counter-free HDA, then the associated ST-automaton is counter-free. Hence, the word language $\STseq(L)$ is defined by an FO formula~\cite{McNaughton71}. Thus, so is the pomset language $L$~\cite{ClementEL26}.
    The conclusion comes from \Cref{thm:aper-cat-fo}.
\end{proof}

\begin{conjecture}
\label{con:aper-implies-counter-free}
	If $L\subseteq \iIPoms_{\leq k}$ is downward-closed w.r.t.\ subsumption, and is recognized be an aperiodic category, then $L$ is accepted by a counter-free HDA.
\end{conjecture}
This statement, which is the converse of \cref{prop:counter-free-implies-aperiodic}, is still open.
Though \cref{def:counter-free-hda} is very similar to other definitions of aperiodicity and counter-freeness, \cref{con:aper-implies-counter-free} is not immediate because of the geometric structure of HDAs imposing that any cell must be equipped with all possible faces.
%The semantic consequence is that any language accepted by an HDA is downward-closed with respect to subsumption,
The construction of \Cref{lem:pres-to-hda} that turns an $\iIPoms$-module into an HDA may need to add new cells to guarantee that face maps are properly defined.
While this construction does not change the accepted language (assuming it was already downward-closed), it can add non-trivial counters to the structure. %, effectively making the HDA non-counter-free.

\subsection{Creating counters}
\label{sec:creating-counters}

In this subsection, we show that the HDA construction of \Cref{lem:pres-to-hda} does not preserve aperiodicity.
%
%The idea relies on a similar construction to the one showing that not all pomset languages are recognized by a deterministic HDA from~\cite{FahrenbergZ24}.
%The reason why some languages cannot can be recognized only by non-deterministic HDA is because of their strict structure:
%if there is a path labelled by $\conclist{a\\b}$ from $x$ to $y$, then there must be a path labelled by $ab$ from $x$ to $y$.
%The semantic consequence of this is that any language accepted by an HDA is downward-closed. However, the structural implications are actually stronger.
%For instance, if $abc \in L$ but $\conclist{a\\b}\glue c \notin L$, then there must be a path labelled by $ab$ from the initial cell $x \in X$ to
%a cell $y$ frow which there is an accepting path labelled by $c$. But there must also be a path labelled by $\conclist{a\\b}$ from $x'$ to $y'$ such that
%$y'$ does not have an accepting path labelled by $c$. Hence, there is a path labelled by $ab$ from $x$ to $y'$. This is why the language
%$L = \{\conclist{a\\b}, ab, ba, abc\}$ cannot be accepted by any deterministic HDA, even though there is an $\iIPoms$-module, deterministic by nature,
%that recognizes $L$.
%
%The same issue arises when we try to show that any HDA constructed from an aperiodic module is counter-free.
Indeed, to turn an $\iIPoms$-module into an HDA recognizing the same language, the construction of \Cref{lem:pres-to-hda} (adapted from~\cite{FahrenbergZ24}) introduces supplementary lower faces.
Such extra cells, required by the HDA structure, may create counters that were not present in the original module.
\begin{proposition}
    There exists a counter-free presentation $(M,J,\varphi)$ such that the resulting HDA from \Cref{lem:pres-to-hda} is not counter-free.
\end{proposition}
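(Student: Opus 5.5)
The plan is to exhibit an explicit small example. Since the statement is existential, I would build a concrete finite presentation, in the style of \Cref{ex:presentation}, of a downward-closed language over a two-letter alphabet $\{a,b\}$ in dimension~$2$. I would then check that it is counter-free and coherent, apply \Cref{lem:pres-to-hda}, and exhibit a cell $x$ and a pomset $P\in\iIPoms(U,U)$ with $d(x,P^{n+1})\neq d(x,P^n)$ for every~$n$.

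The source of counters is the nondeterminism created by the lower faces. In the HDA of \Cref{lem:pres-to-hda}, the step $\upstep{A}$ from a cell $\phi(R)$ may lead to \emph{any} cell $\phi(P')$ with $\phi(P'-A)=\phi(R)$. This includes cells $\phi(P')$ in which the events of $A$ were started much earlier, concurrently with other events, so that $P'$ is not $R\glue\starter{U}{A}$. In the module, by contrast, $\phi(R)\cdot\starter{U}{A}$ is a single element.

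The concrete choice I would try first is the following. Take $P$ to be a loop built from the sequential pomset ``start $b$, terminate $b$'', possibly interleaved with one $a$, so that $P$ is an endomorphism of $\emptyset$ or of $\{a\}$. Then design the module so that there are two cells $p,q$ with:
\begin{itemize}
\item the deterministic module action fixing both, $p\cdot P=p$ and $q\cdot P=q$ (or at least making all iterates of $P$ stabilise, which is what counter-freeness of the module requires);
\item an extra cell $r=\phi(P')$ with $\delta^0_{\{b\}}(r)=p$, whose $b$-terminating upper face lands on $q$; symmetrically, an extra cell $r'$ with $\delta^0_{\{b\}}(r')=q$ whose upper face lands on $p$.
\end{itemize}
The cell $r$ would come from a pomset $P'$ in which $b$ runs concurrently with an $a$, and similarly for $r'$. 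With these two extra cells, the HDA step $P$ sends $p$ to $q$ and $q$ to $p$ along alternative paths. To get a genuine counter rather than merely a growing set, I would arrange that the direct path $p\to p$ is absent in the HDA. For instance, I would choose $P$ so that its only realisation in the HDA passes through these lower-face cells, giving $d(p,P^n)=\{p\}$ for even $n$ and $\{q\}$ for odd $n$. The language would be taken as the downward closure of a few pomsets mixing $\conclist{a\\b}$ with sequential $a$'s and $b$'s, chosen to separate $p$ from $q$ by suffix.

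After fixing the example, the verification splits into three steps. First, check that the module is counter-free: it is finite, so one inspects each element and each endomorphism class, with gray sink elements absorbing everything else. Second, check coherence, or simply use the canonical suffix presentation of the language, which is then automatically surjective; I would check coherence directly on the few relevant elements. Third, compute the relevant faces in the HDA of \Cref{lem:pres-to-hda} and the sets $d(x,P^n)$.

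The main obstacle is the tension in the design of the example. Downward-closure and coherence force $\delta^0$ to be consistent with the module, which tends to make the HDA paths agree with the module action and kill the swap. Meanwhile, counter-freeness of the module forbids the swap from being visible in the module itself. I expect several trial languages before finding one where the lower-face cells genuinely redirect the parity. The guiding heuristic is that $\phi(P'-A)=\phi(R)$ identifies pomsets that differ in \emph{when} $b$ started, whereas the module keeps them apart after $b$ terminates.
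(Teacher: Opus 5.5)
\paragraph{Gap in the proposal.} Your proposal has a genuine gap. The statement is existential, so its whole content is a concrete presentation together with its verification, and you never produce one. More importantly, the mechanism you plan to use cannot work as described.

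In the HDA of \Cref{lem:pres-to-hda}, every module step is also an HDA step. If $m=\phi(R)$, then $\delta^0_A(\phi(R\glue\starter{U}{A}))=\phi(R)$, so there is an up-step from $m$ to $m\cdot\starter{U}{A}$. Down-steps are by definition $\delta^1_B(m)=m\cdot\terminator{U}{B}$. Hence $m\cdot P\in d(m,P)$ for every cell $m$ and every pomset $P$; this is the same argument as in the second half of the proof of \Cref{lem:pres-to-hda}. So once $p\cdot P=p$, you cannot ``arrange that the direct path $p\to p$ is absent''. Then $p\in d(p,P^n)$ for all $n$, and the alternation between $\{p\}$ and $\{q\}$ is impossible. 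Your symmetric design ($p\cdot P=p$, $q\cdot P=q$, plus two swapping lower-face cells) gives $d(p,P^n)=\{p,q\}$ for all $n\geq 1$, i.e.\ no counter at all.

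\paragraph{How the paper fixes this.} The missing idea is to let the deterministic module run escape to an absorbing element $s$, and let the oscillation sit on top of it, so that $d(p,P^n)$ alternates between $\{q,s\}$ and $\{p,s\}$. This is exactly the paper's example. Take $L=\downclosure{\conclist{a\\b}}b^*+a$, $P=b$, $p=\varphi(a)$, $q=\varphi(\conclist{a\\b})$ and $s=\varphi(ab)$. In the module, $p\cdot b=s=s\cdot b$ and $q\cdot b=\varphi(\conclist{a\\b}\glue b)=p$. The single extra HDA edge $p\to q$ comes from the cell $\varphi(\conclist{a\pinterface\\b\interface})$, whose lower face is $\varphi(a)$ and whose upper face is $\varphi(\conclist{a\\b})$.

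Your ingredients (a loop made of one $b$, a cell where $b$ runs concurrently with $a$) are the right ones. However, your fallback of using the canonical suffix presentation goes the wrong way. Here $a\backslash L=ab\backslash L=\conclist{a\\b}\backslash L=b^*$, so the canonical presentation merges $p$, $q$ and $s$, and the resulting HDA is counter-free. The example needs a deliberately non-minimal presentation that separates $\varphi(a)$ from $\varphi(ab)$. Whether the canonical presentation can ever produce a counter is left open in the paper's conclusion.
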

\begin{proof}
    Consider the presentation $(M,J,\varphi)$ depicted in \Cref{fig:pres-aper:pres}.
    In this figure, elements of~$M$ are nodes in the graph, and the labels inside the nodes indicate the preimage by~$\phi$: when $\varphi(P) = m$, the pomset~$P$ is written within the node representing~$m$.
    Elements of $J$ are the nodes with a double edge.
    Arrows represent the action of some starter or terminator.
    This presentation recognizes the downward-closed language $L = \downclosure{\conclist{a\\b}} b^* + a$.
    It is aperiodic because the only cycles are $\varphi(ab) \cdot b = \varphi(ab)$ and $\varphi(abb\interface) \cdot \interface bb\interface
    = \varphi(abb\interface)$ and both are trivial.
    However, the HDA constructed from \Cref{lem:pres-to-hda},
    depicted in \Cref{fig:pres-aper:hda}, is not counter-free.
\end{proof}
\begin{figure}[h]
    \begin{subfigure}[b]{.60\textwidth}
        \centering
        \begin{tikzpicture}[scale=1.7, nodes={draw,align=center}, font=\footnotesize]
            \node at (0,0)                      (v00)   {$\varepsilon$};
            \node at (0,0.9)                      (v01)   {$b\interface$};
            \node at (0,1.8)                      (v02)   {$b$};
            \node at (0.9,0)                      (v10)   {$a\interface$};
            \node at (0.9,0.9)                      (v11)   {$\conclist{a\interface\\a\interface}$};
            \node at (0.9,1.8)                      (v12)   {$\conclist{a\interface\\b\pinterface},$\\$ba\interface$};
            \node [double] at (1.8,0)     (v20)   {$a, bab,$\\$\conclist{a\\b}\glue b$};
            \node at (1.8,0.9)                      (v21)   {$\conclist{a\pinterface\\b\interface}$};
            \node [double] at (1.8,1.8)     (v22)   {$\conclist{a\\b}, ba$};
            \node at (2.9,0)                      (v30)   {$ab\interface, babb\interface,$\\$\conclist{a\\b}\glue bb\interface$};
            \node at (2.9,0.9)                      (v31)   {$\conclist{a\\b}\glue b\interface,$\\$bab\interface$};
            \node [double] at (4,0)     (v40)   {$ab, \dots$};
            \node at (4,0.9)                      (v41)   {$abb\interface,\dots$};

            \draw[-latex, semithick]    (v00) -- (v01);
            \draw[-latex, semithick]    (v10) -- (v11);
            \draw[-latex, semithick]    (v01) -- (v02);
            \draw[-latex, semithick]    (v11) -- (v12);
            \draw[-latex, semithick]    (v21) -- (v22);
            \foreach \y in {0,1,2} {
                \draw[-latex, semithick]    (v0\y) -- (v1\y);
                \draw[-latex, semithick]    (v1\y) -- (v2\y);
            }
            \draw[-latex, semithick]    (v22) -- (v31);
            \draw[-latex, semithick]    (v31) -- (v20);
            \draw[-latex, semithick]    (v20) -- (v30);
            \draw[-latex, semithick]    (v30) -- (v40);
            \draw[-latex, semithick, bend left]    (v40) edge (v41);
            \draw[-latex, semithick, bend left]    (v41) edge (v40);
            \draw[-latex, semithick] (v00) -- (v11);
            \draw[-latex, semithick] (v11) -- (v22);
        \end{tikzpicture}
        \caption{A counter-free presentation for $L=\downclosure{\conclist{a\\b}}b^* + a$.}
        \label{fig:pres-aper:pres}
    \end{subfigure}
    \begin{subfigure}[b]{.35\textwidth}
        \centering
        \begin{tikzpicture}[scale=2, >=stealth, font=\footnotesize]
            % labels
            \node at (0.9,0.5)      {$b$};
            \node at (-0.1,0.5)     {$b$};
            \node at (0.5,-0.1)     {$a$};
            \node at (0.5,1.1)      {$a$};
            \node at (0.5,0.5)      {$\conclist{a\\b}$};
            \node at (1.23,0.5)     {$b$};
            \node at (1.5,0.1)      {$b$};
            \node at (2,0.55)       {$b$};
%            \node at (-0.12,-0.12)  {$\bot$};
%            \node at (1.12, -0.12)  {$\top$};
%            \node at (1.12,1.12)    {$\top$};
%            \node at (1.88,-0.12)   {$\top$};
            % square
            \fill[fill opacity = 0.1] (0,0) -- (0+1,0) --(0+1,0+1) --(0,0+1) -- cycle  ;
            % vertices
            \node[state, initial, initial text=, initial distance=6pt, fill=white, inner sep=0pt, minimum size=8pt] (v00) at (0,0) {};
            \node[state, fill=white, inner sep=0pt, minimum size=8pt] (v01) at (0,1) {};
            \node[state, accepting, fill=white, inner sep=0pt, minimum size=8pt] (v10) at (1,0) {};
            \node[state, accepting, fill=white, inner sep=0pt, minimum size=8pt] (v11) at (1,1) {};
            \node[state, accepting, fill=white, inner sep=0pt, minimum size=8pt] (v20) at (2,0) {};
            % edges
            \foreach \x in {0,1}
            {
                \draw[->, semithick] (v\x0) -- (v\x1);
            }
            \draw[->, semithick] (v00) -- (v01);
            \draw[->, very thick, red] (v10) -- (v11);
            \draw[->, semithick] (v00) -- (v10) ;
            \draw[->, semithick] (v10) -- (v20) ;
            \draw[->, semithick] (v01) -- (v11) ;
            \draw[->, very thick, bend left, red]  (v11) edge (v10) ;
            \draw[->, semithick, loop above] (v20) edge (v20) ;
        \end{tikzpicture}
        \caption{The resulting HDA, counter red.}
        \label{fig:pres-aper:hda}
    \end{subfigure}
    \caption{}%A presentation and the associated HDA for $L = \downclosure{\conclist{a\\b}} b^* + a$.}
    \label{fig:pres-aper}
\end{figure}

This counterexample shows that the most direct approach to proving \Cref{con:aper-implies-counter-free} does not work; however, it does not disprove the conjecture itself.
%This counterexample is not enough to disprove \Cref{con:aper-implies-counter-free}, which asserts that any language recognized by an aperiodic category must
%be accepted by some counter-free HDA. It simply asserts that the construction provided in \Cref{lem:pres-to-hda} does not preserve aperiodicity in some cases.
The language $L = \downclosure{\conclist{a\\b}} b^* + a$ can in fact be recognized by a counter-free HDA.
More precisely, the Myhill-Nerode HDA for this language, constructed from the canonical prefix presentation, is counter-free.
It is the HDA obtained from the one of \Cref{fig:pres-aper:hda} by merging the three accepting cells.

\subsection{Almost counting aperiodic language}
We now showcase an example of an aperiodic language that allows for a certain specific counting. We assume the one-letter alphabet $\Sigma=\{a\}$, and pomsets of dimension bounded by $k=2$.
Let us introduce pomsets $P_0=\varepsilon$, and for $n \geq 1$, 
\[
P_{2n}\;=\;a\interface\conclist{\interface a a\interface \\ a}^{n-1} \conclist{\interface a\\ \pinterface a}
\quad
=
\vcenter{\hbox{	\begin{tikzpicture}
		\node (0) at (0,1) {$a$};
		\node (1) at (0.5,0) {$a$};
		\node (2) at (1,1) {$a$};
		\node (3) at (1.5,0) {$a$};
		\node (4) at (2,1) {$a$};
		\node (5) at (2.5,0) {$a$};
		\draw [->] (0) -- (2);
		\draw [->] (1) -- (3);
		\draw [->] (2) -- (4);
		\draw [->] (3) -- (5);
		\draw [->] (0) -- (3);
		\draw [->] (1) -- (4);
		\draw [->] (2) -- (5);
		\draw [->,dotted] (0) -- (1);
		\draw [->,dotted] (2) -- (1);
		\draw [->,dotted] (2) -- (3);
		\draw [->,dotted] (4) -- (3);
		\draw [->,dotted] (4) -- (5);
		\node at (3.25,0) {$\dotsm$};
		\node at (2.75,1) {$\dotsm$};
		\node (a0) at (3.5,1) {$a$};
		\node (a1) at (4,0) {$a$};
		\node (a2) at (4.5,1) {$a$};
		\node (a3) at (5,0) {$a$};
		\node (a4) at (5.5,1) {$a$};
		\node (a5) at (6,0) {$a$};
		\draw [->] (a0) -- (a2);
		\draw [->] (a1) -- (a3);
		\draw [->] (a2) -- (a4);
		\draw [->] (a3) -- (a5);
		\draw [->] (a0) -- (a3);
		\draw [->] (a1) -- (a4);
		\draw [->] (a2) -- (a5);
		\draw [->,dotted] (a0) -- (a1);
		\draw [->,dotted] (a2) -- (a1);
		\draw [->,dotted] (a2) -- (a3);
		\draw [->,dotted] (a4) -- (a3);
		\draw [->,dotted] (a4) -- (a5);
	\end{tikzpicture}
}}
\]
Intuitively, $P_{2n}$ is the 2-dimensional pomset containing $2n$ events labeled by $a$ such that the $i$-th event is concurrent with
the $(i-1)$-th and the $(i+1)$-th events. Furthermore, they are alternating: odd events are on the top (i.e., they are minimal according to event order)
and even events are at the bottom (i.e., they are maximal).

\begin{lemma}
	The (non-downward-closed) language $\{P_{2n} \mid n\geq 0\}$ is aperiodic.
\end{lemma}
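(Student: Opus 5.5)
By \Cref{thm:aper-cat-fo}, it suffices to exhibit one of three certificates for $L=\{P_{2n}\mid n\geq 0\}$: an FO formula, an aperiodic category, or a counter-free module. The most direct route is to describe $L$ by an FO formula over pomsets. I would rather not argue directly about ST-sequences, because the sparse ST-decomposition of $P_{2n}$ is $a\interface\,\conclist{\interface a\interface\\ \pinterface a\interface}\,\conclist{\interface a\pinterface\\ \interface a\interface}\cdots$ and the word-level description requires care with the interface objects.

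The key observation is that the pomsets $P_{2n}$ are characterized by local, first-order properties. A pomset $P$ with $S_P=T_P=\emptyset$ and all labels $a$ equals some $P_{2n}$ if and only if:
\begin{enumerate}
\item the incomparability graph $x\parallel y$ is a simple path, so every event has at most two concurrent events;
\item exactly two events have at most one concurrent event (the endpoints), unless $P$ is empty;
\item the incomparability graph is connected;
\item the event order orients the path alternately, meaning every event is either $\eventorder$-minimal among its $\parallel$-neighbours or $\eventorder$-maximal among them, and these two roles alternate;
\item the $\precedence$-minimal endpoint is ``top'' and the $\precedence$-maximal endpoint is ``bottom''.
\end{enumerate}
Connectivity is not FO-expressible in general, but here it follows from the interval-order structure. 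In an interval order, if the incomparability graph has maximum degree $2$ and no isolated vertices, then the events are linearly ordered by their interval positions, and consecutive events are either overlapping or strictly ordered. Connectivity then amounts to saying that no event $x$ has all other events split into ``entirely before'' and ``entirely after'' with no overlap crossing $x$. This is expressible as $\neg\exists x,y.\;x<y\wedge\forall z.(z\leq x\vee y\leq z)$, with $z\leq x$ abbreviating $z<x\vee z=x$, which states that there is no gap.

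Parity then comes for free. Because top and bottom events alternate, and the path starts with a top event and ends with a bottom event, the path automatically has even length. This is why the language can ``count modulo 2'' while still being aperiodic: the parity is encoded in the FO-checkable alternation of the event order.

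The main step is proving that these FO conditions characterize exactly $\{P_{2n}\}$. I expect this to be the real obstacle, specifically showing that a pomset satisfying the local conditions has the shape depicted. The approach is as follows. Order the events by left endpoints in an interval representation. Degree at most $2$, together with the absence of gaps, forces each event to overlap exactly its predecessor and its successor, and to precede everything at distance at least $2$. This yields the precedence relation of $P_{2n}$. The alternation condition on $\eventorder$, combined with the fact that $\eventorder$ must relate all concurrent pairs, then fixes the event order up to the top/bottom choice, and the endpoint conditions fix that choice.

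If this combinatorial characterization becomes messy, I would fall back on the automaton route of \Cref{thm:aper-cat-fo}, item~2. That route builds the canonical suffix module of $L$, which has a handful of states: the initial state, ``after $a\interface$'', the states alternating within the repeated block, ``accepted'', and a sink. One then checks directly that every loop-generating endomorphism $Q\in\iIPoms(U,U)$ acts idempotently after one iteration. The only nontrivial loops are at $U=\conclist{a}$ via $\conclist{\interface a a\interface\\ a}$, which acts as the identity on the middle state, so the module is counter-free.
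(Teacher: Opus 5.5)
Your overall route is the paper's: write an FO sentence saying that the incomparability graph is a path whose events alternate between top and bottom in event order, starting top and ending bottom, and then apply \Cref{thm:aper-cat-fo}. The gap is in the step you yourself single out as the main one.

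The sentence $\neg\exists x,y.\,x<y\wedge\forall z.(z\leq x\vee y\leq z)$ does not express connectivity, even for interval orders of degree at most $2$ without isolated events. Take $\conclist{a\\a}\glue\conclist{a\\a}$. Every event has exactly one concurrent event, and no pair $x<y$ is a gap, because the partner of $x$ is neither $\leq x$ nor $\geq y$. Yet the incomparability graph has two components. The same pomset refutes your claim that degree at most $2$ plus absence of gaps forces each event to overlap exactly its predecessor and successor, and that claim is the heart of your characterization argument. Sorting by left endpoints of an arbitrary interval representation is not safe either: for $e_1\parallel e_2\parallel e_3$ with $e_1<e_3$, the interval of $e_2$ may start first.

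What actually yields a single path is your condition~2. In dimension at most $2$ there are no three pairwise concurrent events; condition~4 also excludes them, since the middle one is neither top nor bottom. Interval graphs are chordal, so with these triangles excluded each component of a degree-$\leq 2$ interval graph is a path. A non-trivial path contributes exactly two events of degree $\leq 1$ and an isolated event contributes one. So ``exactly two'' leaves either one path or two isolated events, and the no-gap clause is only needed to exclude the chain $aa$. Once the graph is a path $e_1,\dots,e_m$ with, say, $e_1<e_3$, a short induction using transitivity gives $e_i<e_j$ whenever $j\geq i+2$. Condition~5 also needs rewording: for $P_2=\conclist{a\\a}$ both events are $<$-minimal and $<$-maximal, so read literally it excludes $P_2$.

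The paper avoids all of this with purely local clauses. Each event either:
\begin{itemize}
\item has exactly two concurrent events, both above it or both below it in event order, with every other event comparable to it; or
\item is above its unique concurrent event, with all other events after it; or
\item is below its unique concurrent event, with all other events before it.
\end{itemize}
The one-sided end clauses rule out a second component and fix the orientation directly, with no counting or connectivity argument. Those clauses should be read with $y\neq y'$, since otherwise $\conclist{a\\a}\glue\conclist{a\\a}$ satisfies them. They also need $\neg\fostarter(x)\wedge\neg\foterminator(x)$ added, which your $S_P=T_P=\emptyset$ already provides.

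Your module fallback is not needed. As stated it inspects only the block $\conclist{\interface a a\interface\\ a}$, whereas other endomorphisms of the one-event interface act nontrivially and would also have to be checked. For example, $\conclist{\interface a\pinterface\\ \pinterface a\interface}$ sends the ``top running'' state to the ``bottom running'' one.
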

\begin{proof}
	It is easy to verify that this language is described as a set of pomsets $P$ satisfying the following first order formula.
	For every $x$, $a(x)$ and there exist $y, y'$ such that one of the following four conditions hold:
\begin{itemize}
\item $y \eventorder x \dashleftarrow y' \text{ and for all } z \neq x,y,y', \text{either } z < x \text{ or } x < z$,
\item $y \dashleftarrow x \eventorder y' \text{ and for all } z \neq x,y,y', \text{either } z < x \text{ or } x < z$,
\item $x \eventorder y \text{ and for all } z \neq x,y,\; x < z$,
\item $y \eventorder x \text{ and for all } z \neq x,y,\; z < x$. \qedhere
\end{itemize}
\end{proof}

The key idea is that we do not actually need to count the parity of the number of $a$'s. Instead, we keep track of whether the event is up or down according to the event order.
If we start up, alternate positions, and finish down, the number of $a$'s will be even.
Note that the language $\downclosure{\{P_{2n}\}}$ is not aperiodic, since $\downclosure{\{P_{2n}\}} \mathop{\cap} a^*=(aa)^*$ is not aperiodic.

\begin{lemma}
\label{l:weird-language}
	The language $L= \iIPoms_{\leq 2}(\emptyset, \emptyset) \setminus \{P_{2n}\}$ is downward-closed and aperiodic.
\end{lemma}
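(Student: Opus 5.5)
The statement has two parts, and I would handle them separately: aperiodicity by a first-order definition combined with \Cref{thm:aper-cat-fo}, and downward-closure by a direct combinatorial argument showing that each $P_{2n}$ is maximal for subsumption among pomsets of dimension at most~$2$.

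\textbf{Aperiodicity.} First, $\iIPoms_{\leq 2}(\emptyset,\emptyset)$ is FO-definable. It is given by the formula
\[
\forall x.\,\bigl(\neg\fostarter(x)\wedge\neg\foterminator(x)\bigr)
\]
together with the formula saying there are no three pairwise $\parallel$-related events, where $x \parallel y$ is as in the proof of \cref{prop:FO-not-downward-closed}. By the previous lemma, $\{P_{2n}\}$ is defined by an FO formula~$\psi$. Note that every $P_{2n}$ has empty interfaces and dimension at most~$2$. Hence $L$ is defined by the conjunction of the first two formulas with $\neg\psi$. By \Cref{thm:aper-cat-fo}, $L$ is then recognized by an aperiodic category.

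\textbf{Downward-closure: reduction.} Let $Q\in L$ and $P\subsumes Q$. Subsumption is witnessed by a bijection preserving interfaces, so $P$ also has empty interfaces. The bijection reflects precedence, so any $<_P$-antichain is a $<_Q$-antichain; thus $\dim P\le \dim Q\le 2$. It remains to exclude $P=P_{2n}$. So I would show the key claim: if $P_{2n}\subsumes Q$ with $Q\in\iIPoms_{\leq 2}(\emptyset,\emptyset)$, then $Q\cong P_{2n}$. This is impossible for $Q\in L$.

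\textbf{Downward-closure: maximality of $P_{2n}$.} Write the events of $P_{2n}$ as $e_1,\dots,e_{2n}$. Then $e_i<e_j$ holds exactly when $j\ge i+2$, and adjacent events are related by the event order. Identify the events of $Q$ with the $e_i$ via the bijection. Then $<_Q\subseteq{<_{P_{2n}}}$, so it suffices to show that no pair $e_i<e_j$ is lost in~$Q$. Suppose some pair is lost, and choose a lost pair $(i,j)$ with $j-i$ minimal. There are three cases.
\begin{itemize}
\item If $j-i=2$, then $e_i,e_{i+1},e_{i+2}$ are pairwise incomparable in~$Q$. This is an antichain of size~$3$, contradicting $\dim Q\le 2$.
\item If $j-i=3$, then by minimality $e_i<_Q e_{i+2}$ and $e_{i+1}<_Q e_{i+3}$. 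Meanwhile $e_i\not<_Q e_{i+3}$ by assumption, and $e_{i+1}\not<_Q e_{i+2}$ already in $P_{2n}$. This is a $\mathbf{2}+\mathbf{2}$, contradicting that $<_Q$ is an interval order.
\item If $j-i\ge 4$, then by minimality $e_i<_Q e_{i+2}$ and $e_{i+2}<_Q e_j$, so $e_i<_Q e_j$ by transitivity, a contradiction.
\end{itemize}
Hence $<_Q={<_{P_{2n}}}$. Since labels, event order and interfaces are preserved, $Q\cong P_{2n}$.

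\textbf{Main obstacle.} I expect the maximality step to be the only real difficulty. It is the one place where both global hypotheses are genuinely used: the dimension bound in the case $j-i=2$, and the interval-order condition in the case $j-i=3$. Without these, one could relax $P_{2n}$ into a more concurrent pomset, and downward-closure would fail.
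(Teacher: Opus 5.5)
Your proof is correct and follows the same route as the paper. Aperiodicity comes from closure of aperiodic (equivalently, FO-definable) languages under difference, and downward-closure from the maximality of each $P_{2n}$ with respect to subsumption in $\iIPoms_{\leq 2}$. The paper only asserts that maximality; your minimal-lost-pair argument soundly fills in that step, using the dimension bound for gap $2$, the interval-order condition for gap $3$, and transitivity for larger gaps.
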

\begin{proof}
	The difference of aperiodic languages is aperiodic. Furthermore, $L$ is downward-closed  since $P_{2n}$ are maximal pomsets with respect to subsumption in $\iIPoms_{\leq 2}$.
\end{proof}

%\begin{corollary}
%	There exists an aperiodic downward-closed language $L\subseteq \iIPoms_{\leq 2}(\emptyset, \emptyset)$ such that
%	\begin{itemize}
%	\item 
%		it contains all pomsets of odd length,
%	\item 
%		for every $n\geq 0$, there is a pomset of length $2n$ that does not belong to $L$.
%\end{itemize}	
%\end{corollary}

Thus, there exists an aperiodic downward-closed language $L$ such that (1) it contains all pomsets of odd length, and (2) for every $n\geq 0$, there is a pomset of length $2n$ that does not belong to $L$.
We conclude from this example that aperiodic pomset languages admit a limited form of ``counting'', in contrast to word languages. The language $L$ from \cref{l:weird-language}, however, can be recognized by a counter-free HDA, so it still does not invalidate the conjecture.

%% file: conclusion.tex
\section{Conclusion and future work}

We have proved an algebraic characterization of regular languages of HDA: a pomset language is regular if and only if it is the preimage of a functor into a finite category.
Furthermore, the finite category recognizing a language is aperiodic if and only if the language is definable in first order logic.
This generalizes the McNaughton-Papert theorem to a higher-dimensional setting.
Aperiodic pomset languages are peculiar in that they allow for some restricted form of counting, unlike word languages.

We have defined a notion of counter-free HDA, and shown that every language accepted by a counter-free HDA is aperiodic.
However, the converse is still an open question: it is not yet well understood how aperiodicity interacts with downward-closure, which is intrinsic to HDAs.
One possible way to solve it would be to prove that the HDA constructed from the canonical suffix presentation using \Cref{lem:pres-to-hda} is always counter-free.
There does not seem to be a specific reason why this should be the case, but we have not yet found a counterexample.
Another, perhaps more promising, approach could be to show that, if an aperiodic language is recognized by a non-counter-free HDA, it is always possible to add some extra cells so as to eliminate the counter without changing the language.

One concrete use-case of the McNaughton-Papert theorem, for word automata, is to decide whether a given regular language is FO-definable: first compute its syntactic monoid, then check whether it is aperiodic. Doing the same for HDA is an interesting research direction: how to effectively compute the syntactic category, is it guaranteed to be aperiodic?
Finally, another avenue for future work would be to find a notion of star-free regular expressions that captures the same class of pomset languages.
HDA languages do not behave well w.r.t.\ complementation~\cite{AmraneBFZ25}, so finding a good notion of star-free regular expression is not straightforward.